\newtheorem{theorem}{Theorem}
\newtheorem{lemma}[theorem]{Lemma}
\newtheorem{corollary}[theorem]{Corollary}
\def\QED{\ensuremath{{\Box}}}
\def\markatright#1{\leavevmode\unskip\nobreak\quad\hspace*{\fill}{#1}}
\newenvironment{proof}
 {\begin{trivlist}\item[\hskip\labelsep{\bf Proof.}]}
 {\markatright{\QED}\end{trivlist}}
\newcommand{\old}[1]{{}}
\begin{document}

\begin{frontmatter}


\title{Particle Computation: Complexity, Algorithms, and Logic\footnote{
This paper provides full details for and combines results of a number of different extended abstracts that have appeared in the
International Symposium on Algorithms and Experiments for Sensor Systems, Wireless Networks and Distributed Robotics (ALGOSENSORS 2013)~\cite{Becker2013f},
IEEE International Conference on Robotics and Automation (ICRA 2014)~\cite{Becker2014} and
ICRA 2015~\cite{shad2015particle}.
See video from the  31st International Symposium on Computational Geometry (SoCG'15)~\cite{bmd+-pcdfbm-15}
for illustrations and animations.}
}

\author[uh]{Aaron T.~Becker}
\ead{atbecker@uh.edu}
\author[mit]{Erik D.~Demaine}
\ead{edemaine@mit.edu}
\author[tubs]{S{\'a}ndor P.\ Fekete\corref{corr}}
\ead{s.fekete@tu-bs.de}
\author[uh]{Jarrett Lonsford}
\ead{JLLonsford@uh.edu}
\author[brandeis]{Rose Morris-Wright}
\ead{rmorriswright@brandeis.edu}

\address[uh]{Department of Electrical and Computer Engineering, University of Houston, TX, USA,}
\address[mit]{Computer Science and Artificial Intelligence Laboratory, MIT, Cambridge, MA, USA,}
\address[tubs]{Department of Computer Science, TU Braunschweig, Germany,}
\address[brandeis]{Mathematics, Brandeis University, Waltham, MA, USA}



\begin{abstract}
We investigate algorithmic control of a large swarm of mobile particles (such as robots, sensors, or building material) that move in a 2D workspace using a global input signal (such as gravity or a magnetic field).
Upon activation of the field, each particle moves maximally in the same direction until forward progress is blocked by a stationary obstacle or another stationary particle.
In an open workspace, this system model is
of limited use because it has only two controllable degrees of freedom---all
particles receive the same inputs and move uniformly. We show that adding a
maze of obstacles to the environment can make the system drastically more complex but also more useful.

We provide a wide range of results for a wide range of questions. These can be subdivided into {\em external} algorithmic 
problems, in which particle configurations serve as input for computations that are performed 
elsewhere, and {\em internal} logic problems, in which the particle configurations themselves are used
for carrying out computations.

For external algorithms, we give both negative and positive results.
If we are {\em given} a set of stationary obstacles, we prove that it is NP-hard to decide whether a given initial configuration 
of unit-sized particles can be transformed into a desired target configuration. Moreover, we show that finding a control sequence of minimum length is PSPACE-complete.
We also work on the inverse problem, providing constructive algorithms to {\em design} workspaces that efficiently implement arbitrary permutations between different configurations.

For internal logic, we investigate how arbitrary computations can be implemented.
We demonstrate how to encode \emph{dual-rail logic} to
build a universal logic gate that concurrently evaluates {\sc and, nand, nor,} and {\sc or} 
operations. Using many of these gates and appropriate interconnects, we can
evaluate any logical expression. 
However, we establish that
simulating the full range of complex interactions present in arbitrary digital circuits
encounters a fundamental difficulty: a {\sc fan-out} gate cannot be generated.
We resolve this missing component with the help of 
2$\times$1 particles, which can create {\sc fan-out} gates that produce multiple copies
of the inputs.  
Using these gates we provide rules for replicating arbitrary digital circuits.


\end{abstract}

\begin{keyword}
Programmable matter, robot swarms, nano-particles, uniform inputs, parallel motion planning, complexity, array permutations,
NP-completeness, PSPACE-completeness, efficient algorithms, logic gates, universal computation.
\end{keyword}

\end{frontmatter}

\section{Introduction}\label{sec:intro}

\begin{center}
\begin{minipage}{0.8\textwidth} 
{\em ``Programmable matter refers to a substance that has the ability to change its
physical properties (shape, density, moduli, conductivity, optical properties,
etc.) in a programmable fashion, based upon user input or autonomous sensing.
The potential applications are endless, e.g., smart materials, autonomous
monitoring and repair, or minimal invasive surgery. Thus, there is a high
relevance of this topic to industry and society in general, and much research
has been invested in the past decade to fabricate programmable matter. However,
fabrication is only part of the story: without a proper understanding of how to
program that matter, complex tasks such as minimal invasive surgery will be out
of reach.''} \cite{dag16}
\end{minipage}
\end{center}

Since the first visions of
massive sensor swarms, more than ten years of work on sensor networks have
yielded considerable progress with respect to hardware miniaturization.
The original visions of ``Smart Paint''~\cite{AAC00}
or ``Smart Dust''~\cite{kahn00emerging} have triggered a considerable amount
of theoretical research on swarms of {\em stationary}
processors, e.g., the work in \cite{fk-gbrls-06,fk-trsn-07,fkp+-nbtrs-04,kfp+-dbrte-06}.
Recent developments in the ability to design, produce, and control
particles at the micro and nanoscale and the rise of 
possible applications, e.g., targeted drug delivery, micro and nanoscale construction, and Lab-on-a-Chip,
motivate the study of large swarms of {\em mobile} objects.
But how can we control such a swarm with only limited computational
power and a lack of individual control by a central authority?
Local, robotics-style motion control by the particles themselves appears hopeless because 
the capacity for storing energy for computation, communication, and motion control is proportional to the volume, but volume shrinks
with the third power of particle length.

A possible answer lies in applying a global, external force to all particles in the swarm. 
This resembles the logic puzzle Tilt \cite{Tilt}, slide and merge games such as the 2048 puzzle \cite{abdelkader20162048}, and dexterity ball-in-a-maze puzzles
such as Pigs in Clover and Labyrinth, which involve tilting a board to cause all
mobile pieces to roll or slide in a desired direction.
Problems of this type are also similar to sliding-block puzzles with fixed obstacles
\cite{Demaine2000,Hoffmann2000,Holzer2004,HEARN200572in2005},
except that all particles receive the same control inputs, as in the Tilt puzzle.
In the real world,
driving ferromagnetic particles with a magnetic resonance imaging (MRI) scanner gives a milli-scale example of this challenge~\cite{Vartholomeos2012}. 
At the micro-scale, Becker et al.~\cite{Becker2013a} demonstrate how to apply
a magnetic field to simultaneously move cells containing iron particles in a specific
direction within a fabricated workspace (see Fig.~\ref{fig:magneticPyriformis}). 
Other recent examples include
using the global magnetic field from an MRI to guide magneto-tactic bacteria
through a vascular network to deliver payloads at specific locations
\cite{Chanu2008} and using electromagnets to steer a
magneto-tactic bacterium through a micro-fabricated maze \cite{Khalil2013b}; however, this still involves
only individual particles at a time, not the parallel motion of a whole, massive swarm.
How can we manipulate the overall swarm with coarse global control,
such that individual particles arrive at multiple different destinations in a (known) complex vascular
network such as the one in Fig.~\ref{fig:vascularnet}?
And how can we use the complex interaction of the particles to carry out complex computations
{\em within} the swarm?

\begin{figure}
\centering
\subfloat[][\label{fig:magneticPyriformis}(Left) After feeding iron particles
to ciliate eukaryon (\emph{Tetrahymena pyriformis}) and magnetizing the
particles with a permanent magnet, the cells can be turned by changing the
orientation of an external magnetic field (see colored paths in the center
image). (Right) Using two orthogonal Helmholz electromagnets, Becker et al.~\cite{Becker2013a}
demonstrated steering many living magnetized \emph{T. pyriformis} cells.
All cells are steered by the same global field. ]
{\begin{overpic}[width=\columnwidth]{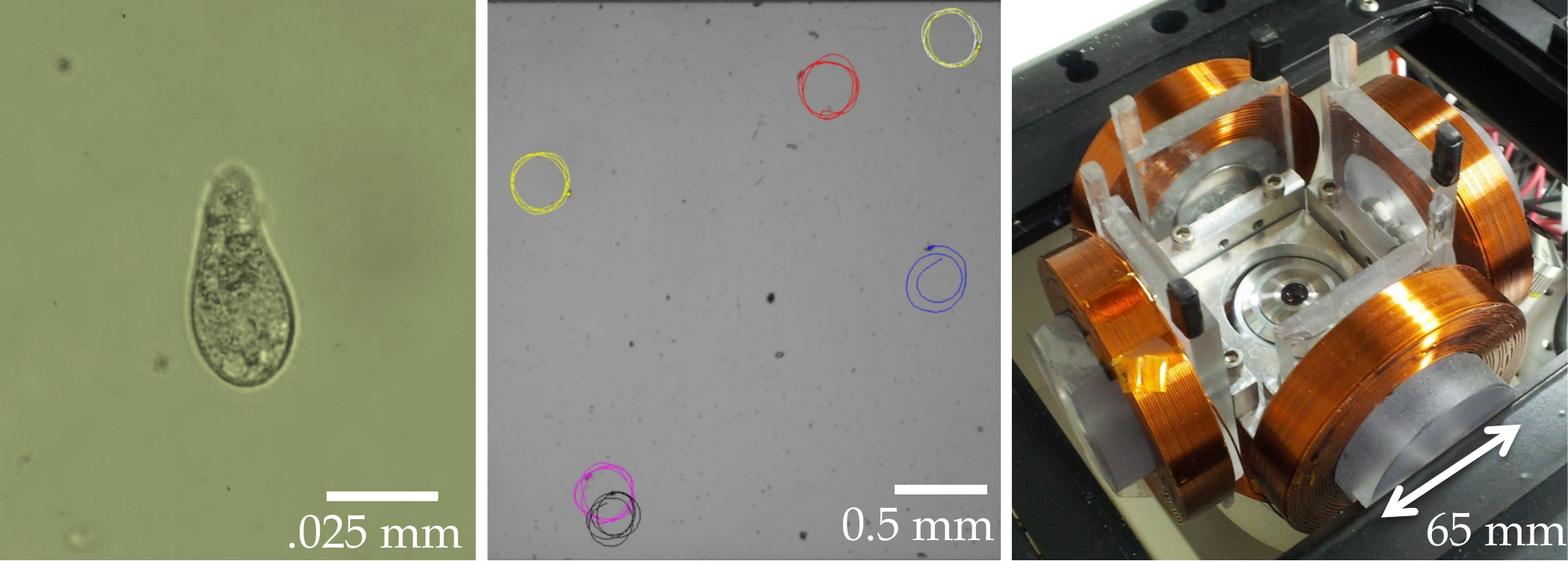}
\end{overpic}
}
\hspace{.5em}
\subfloat[][
\label{fig:vascularnet}
Biological vascular network (cottonwood leaf). (Photo: \href{http://www.tssphoto.com/index.php?p=980}{Royce Bair/Flickr/Getty Images}.)  Given such a network along with initial and goal positions of $N$ particles,  is it possible to bring each particle to its goal position using a global control signal?  Note that this arrangement is \emph{not} a tree, but a graph structure with many cycles. {\sc Matlab} code for driving $N$ particles through this network is available at \href{http://www.mathworks.com/matlabcentral/fileexchange/42892}{http://www.mathworks.com/matlabcentral/fileexchange/42892}.

\label{fig:vascularNetwork}]
{\begin{overpic}[width=0.7\columnwidth, angle = 270]{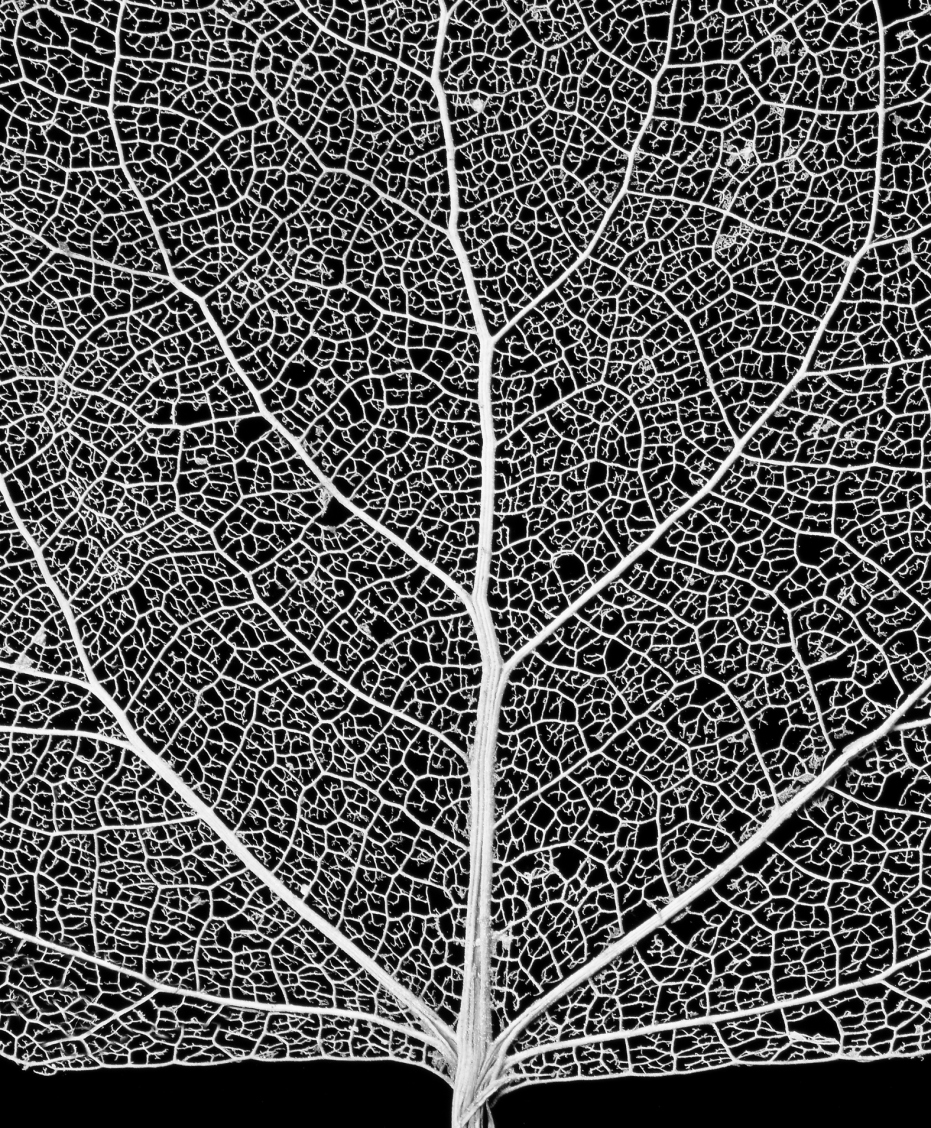}\end{overpic}
}
\caption{(Top) State of the art in controlling small objects by force fields. (Bottom) A complex vascular network,
forming a typical environment for the parallel navigation of small objects. This paper investigates parallel navigation in discretized 2D environments.
}
\end{figure}

All this gives rise to the following two families of problems, which we denote by {\em External Computation} and {\em Internal Computation}. 

\bigskip
{\bf External Computation.}
Considering the particle swarm as input for a given algorithmic problem, we are faced with a number
of questions that need to be resolved externally, such as the following.

\begin{enumerate}
\item 
Given a map of an environment, such as the vascular network shown
in Fig.~\ref{fig:vascularNetwork}, along with initial and goal positions for each particle, 
does there exist a sequence of inputs that will bring each particle to its goal position?
\item 
Given a map of an environment, such as the vascular network shown
in Fig.~\ref{fig:vascularNetwork}, along with initial and goal positions for each particle,
what is the shortest sequence of moves that will bring each particle to its goal position?
\item
Given initial and goal positions for each particle in a swarm, how can we design a set of obstacles
and a sequence of moves, such that each particle reaches its goal position?
\end{enumerate}

Deliberate use of existing stationary obstacles leads to a wide range of possible particle configurations.
In the first part of the paper (Section~\ref{sec:mazes} and  Section~\ref{sec:matrices}), we give answers to the first two questions by showing that they 
may lead to computationally difficult situations.
We also develop several positive results for the third question (again in Section~\ref{sec:matrices}).
The underlying idea is to construct artificial
obstacles (such as walls) that allow arbitrary rearrangements of a given two-dimensional particle swarm. 
For clearer notation, we will formulate the relevant statements in the language of matrix operations,
which is easily translated into plain geometric language.
This paper investigates these problems in 2D discretized environments, leaving 3D and continuous environments for future work.

\bigskip
{\bf Internal Computation.}
Considering the particle swarm as a complex system that can be reconfigured
in various ways, we are faced with issues of the computational power of the swarm itself,
such as the following.

\begin{enumerate}
\item Can the complexity of particle interaction be exploited to model logical operations?
\item Are there limits to the computational power of the particle swarm?
\item How can we achieve computational universality with particle computation?
\end{enumerate}

In the second part of the paper (Section~\ref{sec:logic}), we give precise answers to all of these
questions. In particular, we show that the logical operations 
{\sc and, nand, nor}, and {\sc or} can be implemented in our model using dual-rail logic. Using terminology from electrical engineering, we call these components that calculate logical operations \emph{gates}.
We establish a fundamental limitation for particle interactions: we cannot duplicate the output of a chain of gates without also duplicating the chain of gates. 
This means a {\sc fan-out} gate cannot be generated.
We resolve this missing component with the help of
2$\times$1 particles, which can be used to create {\sc fan-out} gates that produce multiple copies
of the inputs without needing duplicate gates.  
Using  {\sc fan-out} gates, we provide rules for replicating arbitrary digital circuits.

In the following, we start by a brief formal problem definition (Section~\ref{sec:prelim}) and a discussion of
related work (Section~\ref{sec:related}), then continue to provide the results on External Computation
(Sections~\ref{sec:mazes} and \ref{sec:matrices}) and Internal Computation (Sections~\ref{sec:logic} and \ref{sec:Design}).
We conclude with future work (Section~\ref{sec:conclusion}).

\section{Problem Definition}
\label{sec:prelim}

\subsection{Model}
  
Our model is based on 
the following rules: 
\begin{enumerate}
\item A planar  grid \emph{workspace} $W$ contains a number of unit-size particles and some fixed unit-square obstacles.   A grid cell is referenced by its Cartesian coordinates $\mathbf{x}=(x,y)$, and is either \emph{free} for possible occupation by a particle, or a permanent \emph{obstacle}, which may never be occupied by a particle.
Each particle occupies one grid cell.
\item All particles are commanded in unison: the valid commands are  ``Go Up" ($u$), ``Go Right" ($r$), ``Go Down" ($d$), and ``Go Left" ($\ell$).  
\item The particles all move in the commanded direction  until forward progress is blocked by a stationary obstacle or another blocked particle.
 A \emph{command sequence} $\mathbf{m}$ consists of an ordered sequence of moves $m_k$, where each $m_k\in\{u,d,r,\ell\}$.  A representative command sequence is $\langle u,r,d,\ell,d,r,u,\ldots\rangle$. We assume that $W$ is bounded by obstacles and issue each command long enough for the particles to move to their maximum extent.
\end{enumerate}

The algorithmic decision problem {\sc GlobalControl-ManyParticles}  is to decide whether a given puzzle is solvable. In other words, given a fixed workspace and a start and goal location for each particle, the algorithm determines the existence of a sequence of moves that move the particles to their goal locations. See Fig.~\ref{fig:noSolutionASolution}
for two simple instances.

\begin{figure}
\centering
\begin{overpic}[width=0.65\columnwidth]{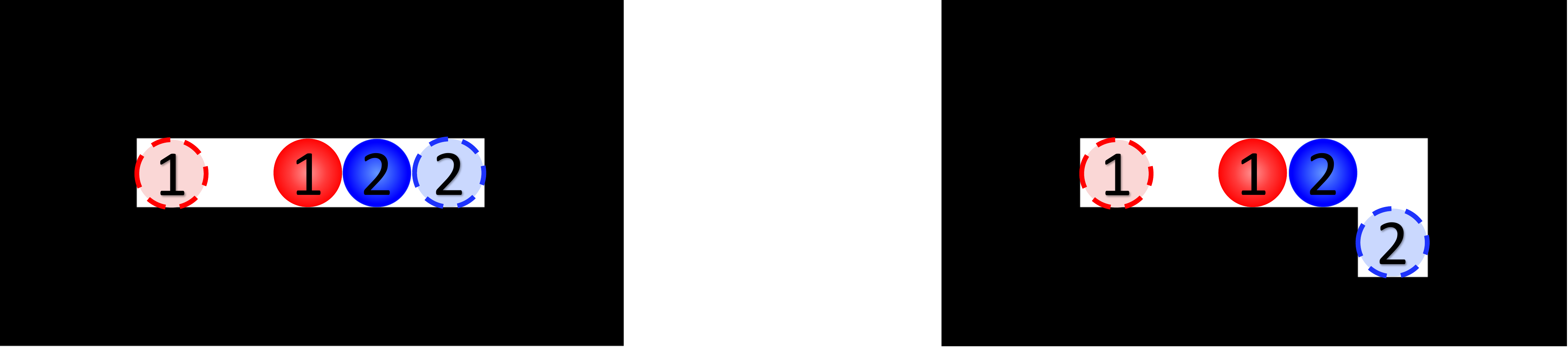}
\end{overpic}
\caption{
\label{fig:noSolutionASolution}
In this image, black cells are fixed, white cells are free, solid discs are individual particles, and goal positions are dashed circles.  For the simple instance on the left, it is impossible to maneuver both particles to end at their goals. The instance on the right has a finite solution: $\langle r,d,\ell \rangle$.  
}
\end{figure}

\section{Related Work}\label{sec:related}

Related work is categorized into underactuated control, manipulation, and computation. 

\subsection{Underactuated Control}

\paragraph{Large Robot Populations}
Due to the efforts of roboticists,  biologists,  and chem\-ists  (e.g., \cite{Rubenstein2012,Ou2013,Chiang2011}),
it is now possible to make and field large ($N=10^3$--$10^{14}$) populations of simple robots.  Potential applications for these robots include targeted medical therapy, sensing, and actuation. With large populations come two fundamental challenges:  how to (1)  perform state estimation for the robots and (2) control these robots.

Traditional approaches often assume independent control signals for each robot, but each additional independent signal requires bandwidth and engineering. These bandwidth requirements grow at $O(N)$.
Using independent signals becomes more challenging as the robot size decreases. 
  Especially at the micro- and nano-scales, it is not practical to encode autonomy in the robots.  
  Instead, the robots are controlled and interacted with using global control signals. For this reason, it may be more appropriate to call the moving agents \emph{particles} and label the external control system as the robot.

More recently, robots have been constructed with physical heterogeneity so that they respond differently to a global, broadcast control signal.  Examples include \emph{scratch-drive microrobots}, actuated and controlled by a DC voltage signal from a substrate \cite{Donald2013};   magnetic structures  with different cross-sections that could be independently steered \cite{Floyd2011};   \emph{MagMite} microrobots with different resonant frequencies and a global magnetic field \cite{Frutiger2008}; and  magnetically controlled nanoscale helical screws constructed to stop movement at different cutoff frequencies of a global magnetic field
\cite{Peyer2013}. In previous work with robots modeled as nonholonomic unicycles, we showed that an inhomogeneity in turning speed is enough to make even an infinite number of robots controllable with regard to position. All these approaches show promise, but they require precise state estimation and heterogeneous robots.
At the molecular scale, there is a bounded number of modifications that can be made to differentiate robots. 
 In addition, the control law computation requires at best a summation over all the robot states $O(N)$ \cite{Becker2012k,Becker01112014} and at worst a matrix inversion $O(N^{2.373})$\cite{Becker2012}. 

In this paper we take a different approach.  We assume a population of approximately identical planar particles (which could be small robots) and  one global control signal that contains the direction all particles should move.  In an open environment, this system is not controllable because the particles move uniformly---implementing any control signal translates the entire group identically;  however, an obstacle-filled workspace allows us to break this symmetry. In previous practical work~\cite{Becker2013b}, we showed that if we can command the particles to move one unit distance at a time, some goal configurations have easy solutions. Given a large free space, we have an algorithm showing that a single obstacle is sufficient for position control of $N$ particles (video of position control: \url{http://youtu.be/5p_XIad5-Cw}).  This result required incremental position control of the group of particles, i.e. the ability to advance them a uniform, fixed distance.  This is a strong assumption and one that we relax in this work. 

\paragraph{Dexterity Games}
The problem we investigate is strongly related to dexterity puzzles---games that typically involve a maze and several balls that should be maneuvered to goal positions. Such games have a long history. \emph{Pigs in Clover}, involving steering four balls through three concentric incomplete circles, was invented in 1880 by Charles Martin Crandall. 
Dexterity games are dynamic and depend on the manual skill of the player.  Our problem formulation also applies the same input to every agent but imposes only kinematic restrictions on agents.  This is most similar to the gravity-fed logic maze \href{http://www.thinkfun.com/tilt}{\emph{Tilt}\texttrademark},
invented by Vesa Timonen and Timo Jokitalo and distributed by \href{http://www.thinkfun.com}{ThinkFun}
since 2010 \cite{Tilt}.

 \paragraph{Sliding-Block Puzzles}
Sliding-block puzzles use rectangular tiles that are constrained to move in a 2D workspace. The objective is to move one or more tiles to desired locations. They have a long history.
Hearn~\cite{hearn2005complexity} and Demaine et al.~\cite{Demaine2009} showed that tiles can be arranged to create logic gates and used this technique to prove {\sc pspace}-completeness for a variety of sliding-block puzzles.  Hearn expressed the idea of building computers from the sliding blocks---many of the logic gates could be connected together, and the user could propagate a signal from one gate to the next by sliding intermediate tiles.  This requires the user to know precisely which sequence of gates to enable and disable.  In contrast to such a hands-on approach, with our architecture we can build circuits, store parameters in memory, and then actuate the entire system in parallel using a global control signal.

 \subsection{Manipulation}
 
\paragraph{Computational Geometry: Robot Box-Pushing}
Many variations of block-pushing puzzles have been explored from a computational complexity viewpoint with a seminal paper proving NP-hardness by Gordon Wilfong in 1991 \cite{Wilfong1991}.
 The general case of motion-planning when each command moves particles a single unit in a world composed of even a single robot and both \emph{fixed} and \emph{moveable} squares is in the complexity class {\sc pspace}-complete
\cite{Demaine2009,Dor1999,Hoffmann2000}.

Ricochet Robots \cite{Engels2005}, Atomix \cite{Holzer2004}, and PushPush \cite{Demaine2000} have the same constraint that particles must move to their full extent, once they have been set in motion. This constraint reflects physical realities where, due to uncertainties in sensing, control application, and dynamic models, precise quantified movements in a specified direction are not possible.
Instead, the input can be applied for a long period of time, guaranteeing that all particles move to their fullest extent. 
In these games the particles move to their full extent with each input, but each particle can be actuated individually.  
The problem complexity with global inputs to all particles is addressed in this paper.


\paragraph{Sensorless Manipulation}
The algorithms in the second half of our paper do not require feedback, and we have drawn inspiration from work on sensorless manipulation~\cite{Erdmann1988}.
Sensorless manipulation explicitly maintains the set of all possible
part configurations and selects a sequence of actions to reduce the size of this set and drive it toward some goal configuration.
Carefully selected primitive operations can make this easier.
Sensorless manipulation strategies often use a sequential composition of
primitive operations, ``squeezing'' a part either virtually with a programmable force field or simply between two flat, parallel plates~\cite{Goldberg1993}.
Some sensorless manipulation strategies take advantage of limit cycle behavior,
for example, engineering fixed points and basins of attraction so that parts only exit a feeder when they reach the correct orientation~\cite{Lynch2002,Murphey2005}.
These two strategies have been applied to a much wider array of mechanisms such as vibratory bowls and tables~\cite{Goemans2006,Vose01082009,vose2012sliding} or assembly lines~\cite{Akella2000,Goldberg1993,Stappen2002},
and have also been extended to situations with stochastic uncertainty~\cite{Goldberg1999,Moll2002} and closed-loop feedback~\cite{Akella1999,Murphey2004a}.

 \subsection{Computation}
\paragraph{Parallel Algorithms: SIMD}

Another related area of research is Single Instruction Multiple Data (SIMD)
parallel algorithms \cite{Leighton1991}.  In this model, multiple processors
are all fed the same instructions to execute, but they do so on different
data.  This model has some flexibility, for example, allowing command execution selectively only on
certain processors and no operations (NOPs) on the remaining processors. 

Our model is actually more extreme. The particles all respond in effectively
the same way to the same instruction.  The only difference is their location
and which obstacles or particles will block them.  In some sense,
our model is essentially Single Instruction, Single Data, Multiple Locations.

Our efforts have similarities with \emph{mechanical computers},  computers
constructed from mechanical, not electrical components. For a fascinating
nontechnical review, see \cite{McCourtney1999}.  These devices have a rich
history, from the \emph{Pascaline}, an adding machine invented in 1642 by a
nineteen-year old Blaise Pascal, to Herman Hollerith's punch-card tabulator in
1890, to the mechanical devices of IBM culminating in the 1940s.  These devices
used precision gears, pulleys, or electric motors to carry out calculations.
Though our implementations in this paper are rather basic, 
we require none of these precision elements to achieve computational universality---merely unit-size obstacles and
sliding particles sized 2$\times$1 and 1$\times$1.

\paragraph{Collision-Based Computing}
Collision-based computing has been defined as \emph{``computation in a structureless medium populated with mobile objects.''}  For a survey of this area, see the excellent collection~\cite{Adamatzky2012}. Early examples include the billiard-ball computer proposed by Fredkin and Toffoli using only spherical balls and a frictionless environment composed of elastic collisions with other balls and with angled walls \cite{Fredkin1982ConservativeLogic}. Another popular example is Conway's {\em Game of Life}, a cellular automaton governed by four simple rules~\cite{berlekamp2001winning}. Cells live or die based on the number of neighbors. These rules have been examined in depth and used to design a Turing-complete computer \cite{Adamatzky2002,rendell2011universal}.  Game of life scenarios and billiard-ball computers are fascinating but lack a physical implementation.  In this paper we present a collision-based system for computation and provide a physical implementation.

\paragraph{Programmable Matter}
Clearly there is a wide range of interesting scenarios for developing approaches to programmable matter.
One such model is the \emph{abstract Tile-Assembly Model} (aTAM) by Winfree~\cite{Winf98,WLWS98,LaWiRe99}, which has 
sparked a wide range of theoretical and practical research. In this model, unit-sized tiles
interact and bond with the help of differently labeled edges, eventually composing complex assemblies.
Even though the operations and final objectives in this model are quite different from our particle computation with global
inputs (e.g., key features of the aTAM are that tiles can have a wide range of different edge types, and
that they keep sticking together after bonding), there is
a remarkable geometric parallelism to a key result of our present paper:
while it is widely believed that at the most basic level of interaction (called {\em temperature 1}),
computational universality {\em cannot} be achieved~\cite{LSAT1,ManuchTemp1,IUNeedsCoop} in the aTAM with only unit-sized tiles, 
 recent work~\cite{fhp+-ucapt-15} shows that computational universality {\em can} be achieved as soon as even slightly bigger tiles are used. 
This resembles the results of our paper, which shows that unit-size particles are insufficient for universal computation, but employing bigger particles suffices.

\section{Mazes}\label{sec:mazes} 
We prove that the general problem defined in Section~\ref{sec:prelim} is computationally intractable.

\begin{theorem}
  {\sc GlobalControl-ManyParticles} is NP-hard:
  given a specified goal location and an initial configuration of movable particles and fixed obstacles, it is
  NP-hard to decide  
  if a move sequence exists that ends with some particle at the goal location.
\end{theorem}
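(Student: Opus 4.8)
The plan is to prove NP-hardness by a polynomial-time reduction from \textsc{3-Sat}. Given a formula $\phi$ with variables $x_1,\dots,x_n$ and clauses $C_1,\dots,C_m$, I would construct a workspace (a maze of fixed obstacles), an initial placement of movable particles, and a single goal cell, so that some particle can be driven to the goal if and only if $\phi$ is satisfiable. The central idea is to let the \emph{command sequence} $\mathbf{m}$ itself play the role of the truth assignment: I partition the intended solution into an \emph{assignment phase}, in which the global moves irrevocably commit each variable to a value, and a \emph{verification phase}, in which one designated ``traveler'' particle is routed through a cascade of clause gadgets and reaches the goal exactly when every clause is satisfied. Because the decision version only asks whether \emph{some} particle reaches the goal, a single dedicated traveler suffices, which simplifies the verification side.

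First I would design a \emph{variable gadget}: a small sub-maze containing one token particle that, under a short committing sub-sequence (for instance ``go right then up'' versus ``go left then up''), slides into one of two pockets encoding $\mathrm{true}$ or $\mathrm{false}$. The essential property to establish here is \emph{monotonicity of the committed state}: once a token is lodged in its pocket, it is blocked in every direction by surrounding obstacles or by the workspace boundary, so that all later global commands leave it pinned. This is exactly the invariant that lets me treat the parallel, uniform actuation as if it were a sequential sequence of independent choices.

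Next I would build the \emph{clause gadget} so that the three literal tokens of a clause control whether the traveler can pass. Each gadget has three literal channels, and a satisfied literal's committed token opens (or, dually, clears a blocker from) the passage; an all-false clause leaves the traveler stuck, so it can never reach the goal. I would then wire the variable and clause gadgets together with corridors and chain the $m$ clause gadgets in series, arguing both directions of correctness: if $\phi$ is satisfiable, the sequence that commits the tokens accordingly threads the traveler through all clauses to the goal (\emph{soundness}); conversely, if any command sequence delivers a particle to the goal, reading off the final pocket positions of the token particles yields a satisfying assignment (\emph{completeness}).

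The main obstacle I anticipate is precisely the global, uniform nature of the control signal: every command moves \emph{every} particle at once, so I cannot address a single gadget in isolation. The delicate core of the construction is therefore to arrange every gadget and every interconnecting corridor so that the same broadcast command advances only the intended particle while all committed tokens stay pinned and all idle particles sit flush against walls, making the relevant command a no-op for them. Getting this synchronization right for all four directions simultaneously—so that the whole configuration behaves like a sequential evaluator despite parallel motion, and so that no later move can silently overwrite an earlier commitment—is the crux of the proof; the individual gadget geometries and the counting argument for polynomial size are then routine by comparison.
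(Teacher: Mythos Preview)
Your high-level plan---reduce from \textsc{3-Sat} and encode the assignment in the command sequence---matches the paper, but the architecture you propose is different and has a real gap. The paper does \emph{not} use a single traveler gated by pinned tokens. It uses a dataflow design: for every \emph{literal occurrence} it builds a separate copy of the corresponding variable gadget, each seeded with its own particle. All copies of gadget~$i$ are geometrically identical, so the global $i$th $\ell/r$ choice routes every copy's particle to the same (\textsc{True} or \textsc{False}) exit simultaneously. Those particles then feed 3-input \textsc{or} gadgets (one per clause), whose outputs feed an $m$-input \textsc{and} gadget whose sole output is the goal cell. No particle is ever required to sit still while others move; everything flows in lockstep, and the broadcast nature of the commands is exploited---identical gadgets react identically---rather than fought.

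Your construction, by contrast, rests on the claim that a committed token ``is blocked in every direction by surrounding obstacles or by the workspace boundary, so that all later global commands leave it pinned.'' In this model that is impossible: a cell walled off on all four sides cannot be entered, so any cell a particle can reach has at least one free neighbor, and the command toward that neighbor dislodges the token. You could try to plug the opening with another particle, but that particle faces the same problem, and you have not described any such mechanism. Without genuine pinning the completeness direction fails: an adversarial command sequence can commit tokens and then issue moves that shift them out of their pockets before or during the traveler's run, so ``reading off the final pocket positions'' need not yield any consistent assignment. Likewise, ``an all-false clause leaves the traveler stuck, so it can never reach the goal'' is unsupported---being blocked once does not preclude reaching the goal under later commands, especially since the blocking token itself will move on subsequent inputs. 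The paper sidesteps all of this by never asking any particle to remain stationary.
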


\begin{proof}
  We prove hardness by a reduction from 3SAT.
  Suppose we are given $n$ Boolean variables $x_1, x_2, \dots, x_n$, and
  $m$ disjunctive clauses $C_j = U_j \vee V_j \vee W_j$,
  where each literal $U_j,V_j,W_j$ is of the form $x_i$ or $\neg x_i$.
  We construct an instance of {\sc GlobalControl-ManyParticles} that has a
  solution if and only if all clauses can be satisfied by a truth
  assignment to the variables. This instance is composed of \emph{variable gadgets} for setting individual variables  {\sc True} or  {\sc False}, \emph{clause gadgets} that construct the logical  {\sc or} of groupings of three variables, and  a \emph{check gadget} that constructs the logical  {\sc and} of all the clauses.  A particle is only delivered to the goal location if the variables have been set in such a way that the formula evaluates to  {\sc True}.

\paragraph{Variable gadgets}
  For each variable $x_i$ that appears in $k_i$ literals,
  we construct $k_i$ instances of the \emph{variable gadget} $i$
  shown in Fig.~\ref{fig:VariableGadget},
  with a particle initially at the top of the gadget.
  The gadget consists of a tower of $n$ levels,
  designed for the overall construction to make $n$ total variable choices.
  These choices are intended to be made by a move sequence of the form
 $\langle d$, $l/r$, $d$, $l/r$, \dots, $d$, $l/r$, $d$, $r\rangle$,
  where the $i$th $l/r$ choice corresponds to setting variable $x_i$
  to either {\sc True} ($l$) or {\sc False} ($r$).
  Thus variable gadget $i$ ignores all but the $i$th choice by
  making all other levels lead to the same destination via both $l$ and~$r$.
  The $i$th level branches into two destinations chosen by either $l$ or $r$,
  which correspond to $x_i$ being set {\sc True} or {\sc False}, respectively.

  In fact, the command sequence may include multiple $l$ and $r$ commands
  in a row, in which case the last $l/r$ before a vertical $u/d$ command
  specifies the final decision made at that level, and the others can be
  ignored.
  The command sequence may also include a $u$ command, which undoes a $d$
  command if done immediately after or else does nothing; thus we can simply
  ignore the $u$ command and the immediately preceding $d$, if it exists.
  We can also ignore duplicate commands (e.g., $d, d$ becomes $d$) and
  remove any initial $l/r$ command.
  After ignoring these superfluous commands, assuming a particle reaches one
  of the output channels, we obtain a sequence in the canonical form
  $\langle d$, $l/r$, $d$, $l/r$, \dots, $d$, $r \rangle$ as desired,
  corresponding uniquely to a truth assignment to the $n$ variables.
  (If no particle reaches the output port, it is as if the variable is neither
  {\sc True} nor {\sc False}, satisfying no clauses.)
  Note that all particles arrive at their output ports at exactly the same time.

\begin{figure}[htbp]
\centering
\subfloat[][\label{fig:VariableGadget1} 
 Variable $x_1$ set {\sc True}  for $i=1$]
{\begin{overpic}[height=0.16\columnwidth]{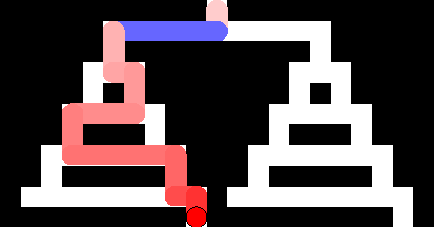}
\end{overpic}
}
\hspace{.1em}
\subfloat[][\label{fig:VariableGadget2} 
  $x_2$ set {\sc False}, $i=2$]
{\begin{overpic}[height=0.16\columnwidth]{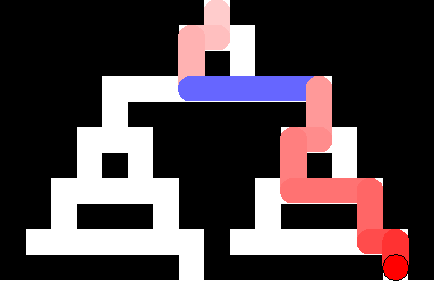}
\end{overpic}
}
\hspace{.1em}
\subfloat[][\label{fig:VariableGadget3} 
   $x_3$ set {\sc True}, $i=3$]
{\begin{overpic}[height=0.16\columnwidth]{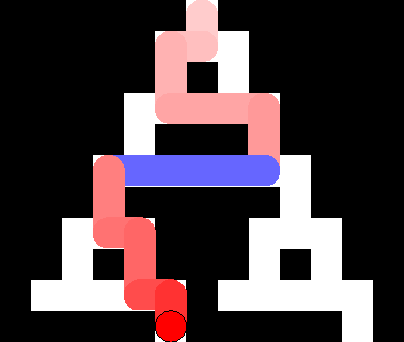}
\end{overpic}
}
\hspace{.1em}
\subfloat[][\label{fig:VariableGadget4} 
   $x_4$ set {\sc False}, $i=4$]
{\begin{overpic}[height=0.16\columnwidth]{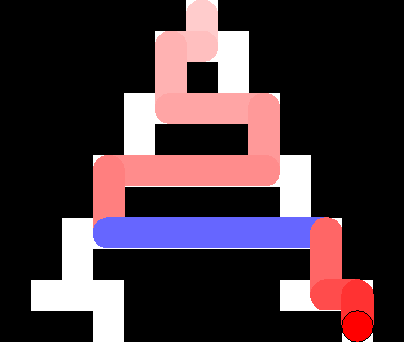}
\end{overpic}
}
\caption{
\label{fig:VariableGadget}
  Variable gadgets that are assigned a truth value by executing a sequence of $\langle d, \ell/r \rangle$ moves. The $i$th $\ell/r$ choice sets the variable $x_i$ to {\sc True} or {\sc False} by putting the particle in the left or right column. This selection move is shown in blue. Each gadget is designed to respond to the $i$th choice but ignore all others. This lets us make several copies of the same variable by making multiple gadgets with the same $i$. In the figure $n=4$, and the input sequence $\langle d,\ell,d,r,d,\ell,d,r,d,r,d\rangle$ sets ({\sc $x_1$=True, $x_2$=False, $x_3$=True, $x_4$=False}).}
\end{figure}

\paragraph{Clause gadgets}
  For each clause, we use the {\sc or} gadget shown in Fig.~\ref{fig:3InputOR}.
  The {\sc or} gadget has three inputs corresponding to the three literals, and input particles are initially at the top of these inputs.
  For each positive literal of the form $x_i$, we connect the corresponding
  input to the left output of an unused instance of variable
  gadget~$i$.
  For each negative literal of the form $\neg x_i$, we connect the
  corresponding input to the right output of an unused instance of a
  variable gadget~$i$.
  (In this way, each variable gadget gets used exactly once.)

  We connect the variable gadget to the {\sc or} gadget
  as shown in Fig.~\ref{fig:3SatGadget}:
   place the variable gadget above the clause
  so as to align the vertical output and input channels,
  and join them into a common channel.
  To make room for the three variable gadgets, we simply extend the black
  areas separating the three input channels in the {\sc or} gadget.
  The unused output channel of each variable gadget 
  is connected to a waste receptacle. Any particle reaching that end
  cannot return to the logic.

  If any input channel of the {\sc or} gadget has a particle,
  then it can reach the output port by the move sequence $\langle d,\ell,d,r \rangle$.
  Furthermore, because variable gadgets place all particles on their
  output ports at the same time, if more than one particle reaches the
  {\sc or} gadget, they will move in unison as drawn in
  Fig.~\ref{fig:3InputOR}, and only one can make it to the output port;
  the others will be stuck in the ``waste'' row, even if extra $\langle \ell,r,u,d \rangle$
  commands are interjected into the intended sequence.
  Hence, a single particle can reach the output of a clause if and only if
  that clause (i.e., at least one of its literals) is satisfied by
  the variable assignment.

\newcommand{\minputANDgate}[1]{
\begin{overpic}[height=0.15\columnwidth]{#1}
\scriptsize
\put(28,47){$o_1$}
\put(37,47){$o_2$}
\put(46,47){$o_3$}
\put(55,47){$o_4$}
\put(64,47){$o_5$}
\put(77,35){waste}
\tiny
\put(31,5){\textcolor{white}{\sc Target}}
\end{overpic}
}

\begin{figure}[htbp]
\centering
\subfloat[][\label{fig:3InputOR} 
 3-input {\sc or}\\$\langle d,\ell,d,r\rangle$ ]
{\begin{overpic}[height=0.15\columnwidth]{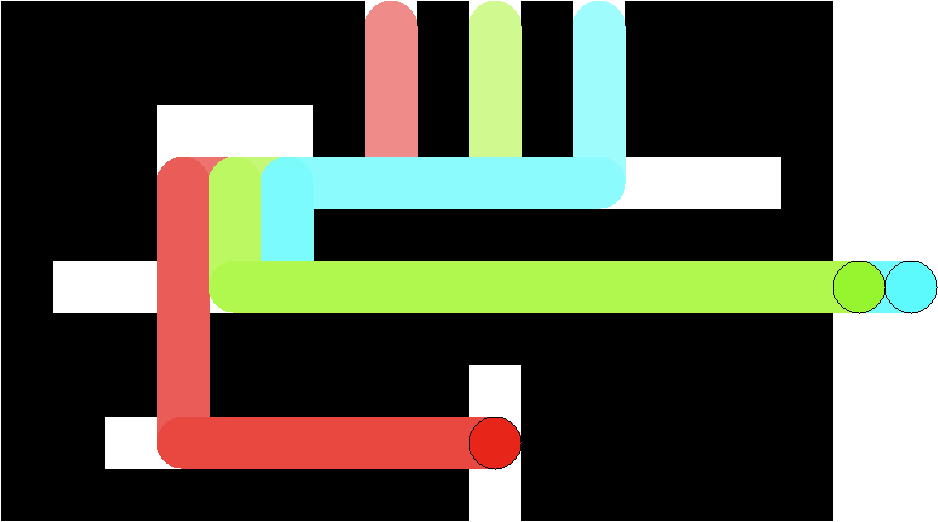}
\scriptsize
\put(38,57){$x_1$}
\put(47,57){$\neg x_3$}
\put(61,57){$x_4$}
\put(50,-3){$o_i$}
\put(90,30){waste}
\end{overpic}
}
\hspace{.1em}
\subfloat[][\label{fig:mInputANDsuccess} 
 $5$-input {\sc and} ({\sc True}) \\$\langle d,\ell,d,r \rangle$  ]
{\minputANDgate{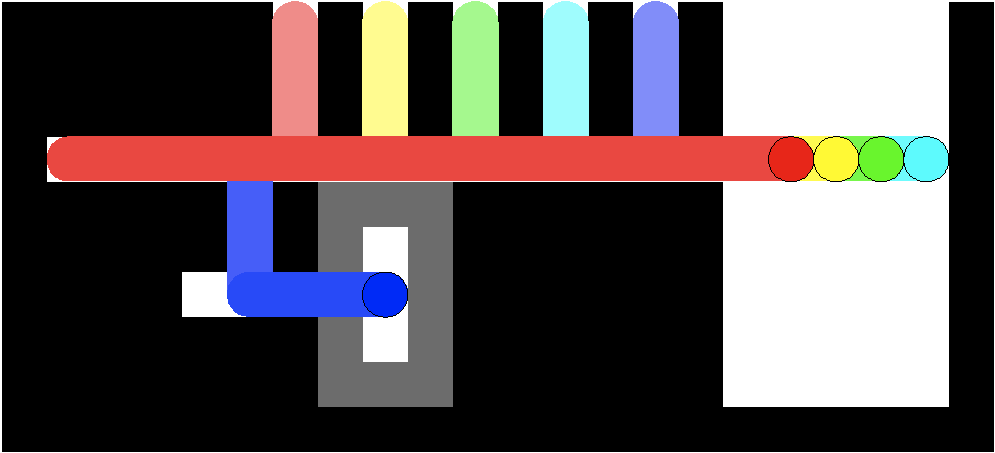}}
\hspace{.1em}
\subfloat[][\label{fig:mInputANDfail}
 $5$-input {\sc and} ({\sc  False}) \\$\langle d,\ell,d \rangle$ ]
{\minputANDgate{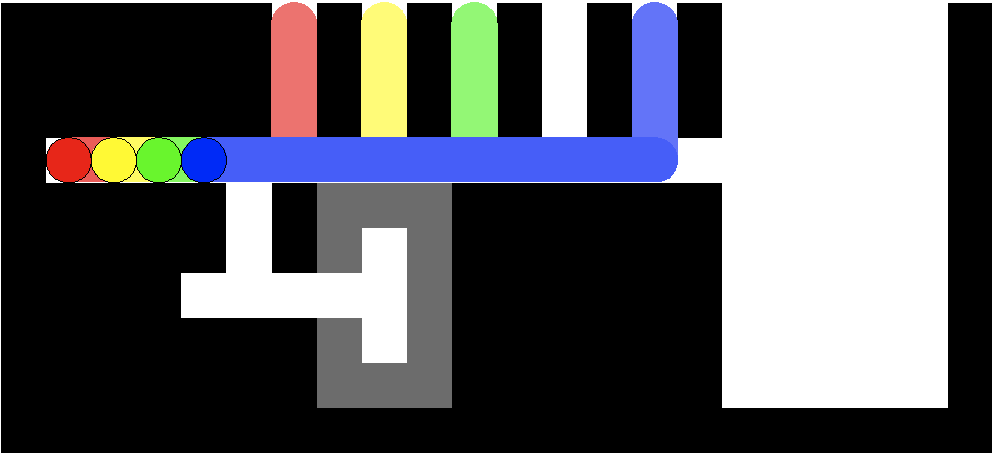}}
\caption{ \label{fig:3InputORandminputAnd} 
Gadgets that use the cycle $\langle d,\ell,d,r \rangle$.
The 3-input {\sc or} gadget outputs one particle if at least one particle enters in an input line and sends any extra particle(s) to a waste receptacle.  The 5-input {\sc and} gadget outputs one particle to the  {\sc Target Location}, marked in gray, if at least 5 inputs are {\sc True}.  Excess particles are sent to a waste receptacle.
 }
\end{figure}

\begin{figure}[htbp]
\subfloat[][\label{fig:3SatStart} \centering
Initial state with particles (colored) on the upper right.  

The objective is to move one particle into the grey target rectangle at lower left.]
{\begin{overpic}[width=1.0\columnwidth]{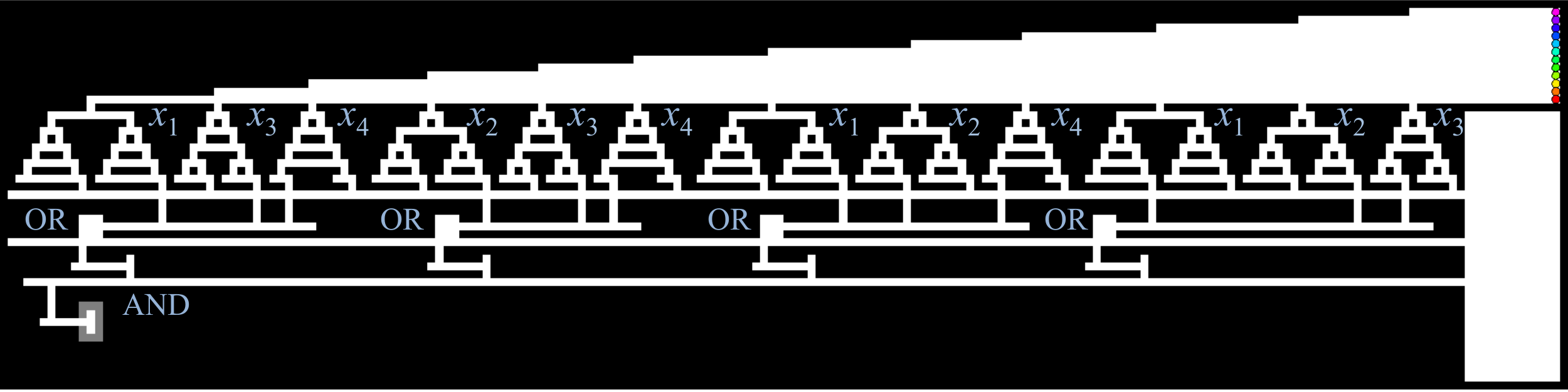}
\end{overpic}
}

\subfloat[][\label{fig:3SatOrFail} 
 Setting variables to ({\sc False, True, False, True}) does not satisfy this 3SAT instance. ]
{\begin{overpic}[width=1.0\columnwidth]{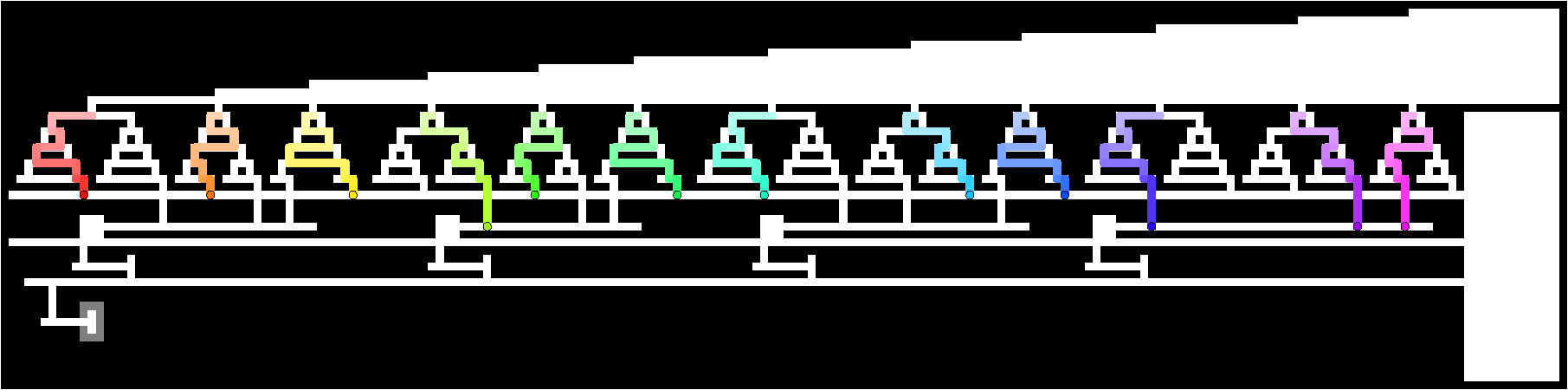}
\end{overpic}
}

\subfloat[][\label{fig:3SatOrSuccess} 
  Setting the variables ({\sc True, False, False, True}) satisfies this 3SAT instance.  ]
{\begin{overpic}[width=1.0\columnwidth]{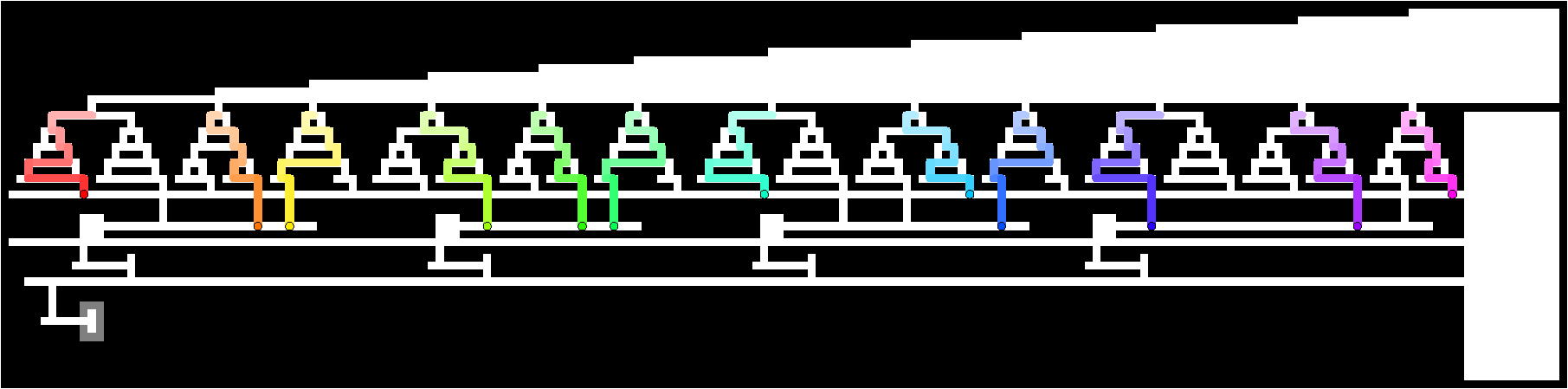}
\end{overpic}
}

\subfloat[][\label{fig:3SatAndSuccess} 
 Successful outcome. ({\sc True, False, False, True}) moves a single particle into the target region. ]
{\begin{overpic}[width=1.0\columnwidth]{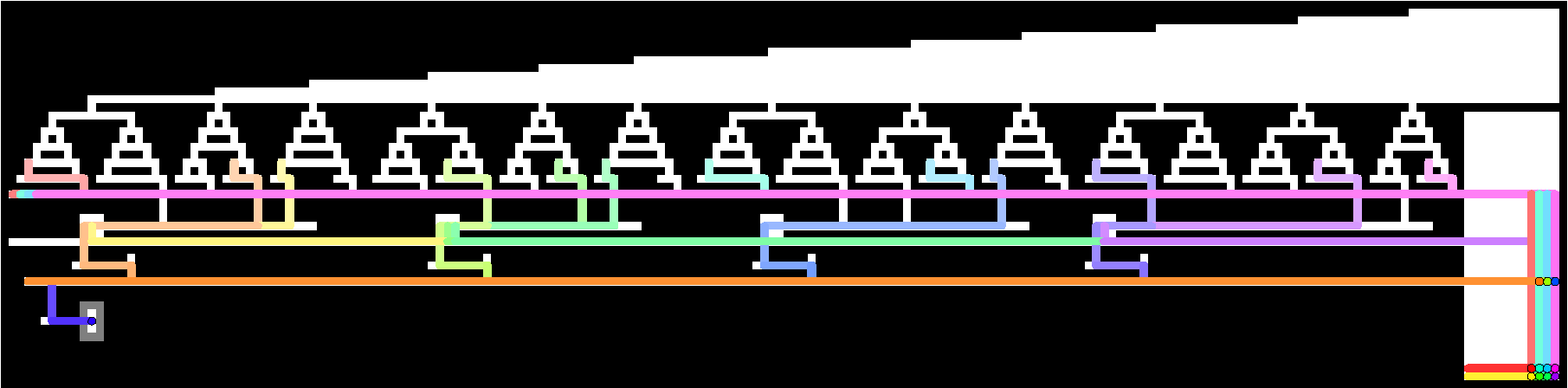}
\end{overpic}
}
\caption{ \label{fig:3SatGadget}
Combining 12 variable gadgets, four 3-input  {\sc or} gadgets, and a 4-input {\sc and} gadget to realize the 3SAT expression $(\neg x_1\vee \neg x_3\vee x_4) \wedge (\neg x_2\vee \neg x_3\vee x_4) \wedge  (\neg x_1 \vee x_2\vee x_4) \wedge (x_1\vee \neg x_2\vee x_3) $.
 }
\end{figure}

\paragraph{Check gadget}
  As the final stage of the computation, we check that all clauses were simultaneously satisfied
  by the variable assignment, using the $m$-input {\sc and} gadget
  shown in Fig.~\ref{fig:mInputANDsuccess} and \ref{fig:mInputANDfail}.
  Specifically, we place the clause gadgets along a horizontal line
  and connect their vertical output channels to the vertical input channels
  of the check gadget.  Again we can align the channels by extending the
  black areas that separate the input channels of the {\sc and} gadget, as shown in the composite diagram Fig.~\ref{fig:3SatGadget}.

  The intended solution sequence for the {\sc and} gadget is $\langle d, \ell, d, r \rangle$.
  The {\sc and} gadget is designed with the downward channel exactly
  $m$ units to the right from the left wall 
  and more than $2m$ units from the right wall, so for any particle to reach the
  downward channel (and ultimately, the target location), at least $m$ particles 
  must be presented as input.  Because each input channel will present at most
  one particle (as argued in a clause), a particle can reach the final destination
  if and only if all $m$ clauses output a particle, which is possible exactly when
  all clauses are satisfied by the variable assignment.

Clearly, the size of all parts of the construction is polynomial in the size of the original 3SAT instance. This completes the reduction and the NP-hardness proof.

\end{proof}


We conjecture that {\sc GlobalControl-ManyParticles} is in fact {\sc pspace}-complete.
One approach would be to simulate
nondeterministic constraint logic \cite{HEARN200572in2005},
perhaps using a unique move sequence of the form $\langle d$, $\ell/r$, $d$, $\ell/r$,
$\dots \rangle$ to identify and ``activate'' a component.
One challenge is that all gadgets must properly reset to
their initial state without permanently trapping any particles. 
However, we are able to prove that a variant of this problem is {\sc pspace}-complete in Section \ref{subsec:pspaceComplete}.


\section{Matrices}\label{sec:matrices}
The previous section investigated pathologically difficult configurations. This
section investigates a complementary problem. Given the same particle and world
constraints as before, 
 what types of control are possible and economical if we are free to design the environment?

First, we describe an arrangement of obstacles that implement an arbitrary
matrix permutation in four commands.  Then we provide efficient algorithms for
sorting matrices. We finish with potential applications.

%

\subsection{A Workspace for a Single Permutation}
\label{subsec:single}

For our purposes, a \emph{matrix} is a 2D array of particles 
(each possibly of a different color).
For an $a_r \times a_c$ matrix $A$ and a $b_r \times b_c$ matrix $B$,
of equal total size $N = a_r a_c = b_r  b_c$,
a \emph{matrix permutation} assigns each element in $A$
a unique position in~$B$.
Example constructions that execute matrix permutations of
total size $N=15$ and $100$ are shown, respectively, in Fig.~\ref{fig:MatrixPermuteAI} and \ref{fig:MatrixPermuteRD}. 

\begin{figure}
\centering
\href{http://www.youtube.com/watch?v=3tJdRrNShXM}{
\begin{overpic}[width=\columnwidth]{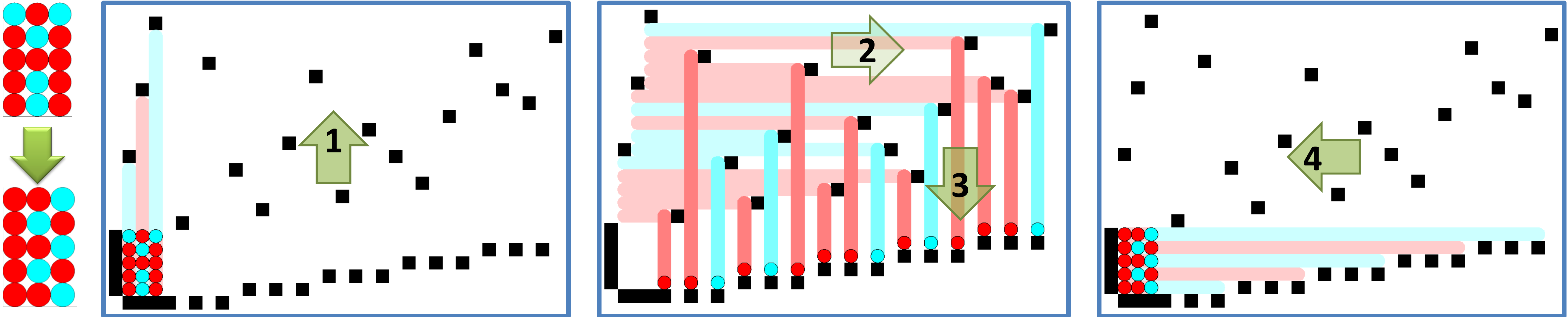}
\end{overpic}}
\caption{
\label{fig:MatrixPermuteAI}
\href{http://www.youtube.com/watch?v=3tJdRrNShXM}{
In this image for $N=15$, black cells are obstacles, white cells are free, and colored discs are individual particles. The world has been designed to permute the particles between `A' into `B' every four steps: $\langle u,r,d,\ell \rangle$. See video at http://youtu.be/3tJdRrNShXM. Visually, the distinction between particles of the same color does not matter; however, the arrangement of obstacles induces a specific permutation of individual particles.}
}
\end{figure}

\begin{figure}
\centering
\href{http://www.youtube.com/watch?v=3tJdRrNShXM}{
\begin{overpic}[width=\columnwidth]{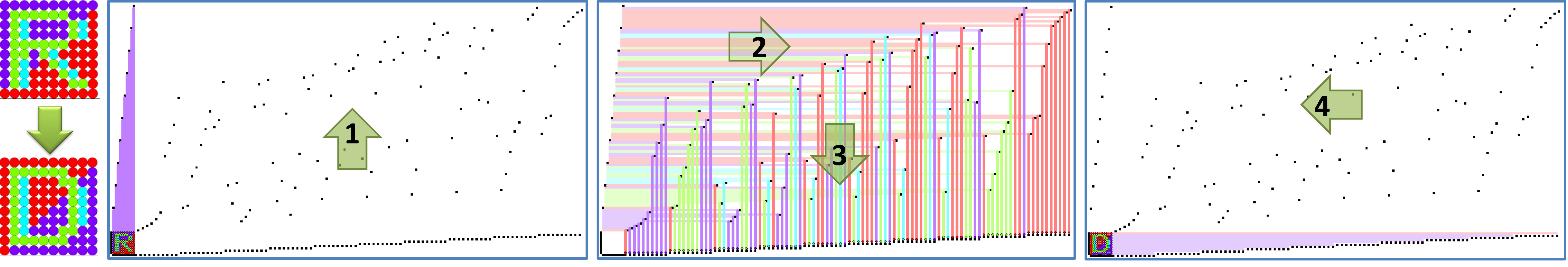}
\end{overpic}}
\caption{
\label{fig:MatrixPermuteRD}
In this larger example with $N=100$, the different control sections are easier to see than in Fig.~\ref{fig:MatrixPermuteAI}.  (1) The staggered obstacles on the left spread the matrix vertically, (2) the scattered obstacles on the upper right permute each element, and (3) the staggered obstacles along the bottom reform each row, which are collected by  (4). The cycle resets every 740 iterations.
See \url{http://youtu.be/eExZO0HrWRQ} for an animation of this gadget.  
}
\end{figure}

\begin{theorem} \label{thm:ArbPermutationUsingObstacles}
  Let $A$ and $B$ be matrices with dimensions as above. Any matrix permutation that transforms $A$ into $B$ can be executed by a set of obstacles 
  in just four moves.
  For $N$ particles, the constructed arrangement of obstacles requires $(3N+1)^2$ space and $4N+1$ obstacles. If particles move with a speed of $v$, the required time for those four moves is $12N/v$.
  and in time $12N/\mathrm{speed}$. 
\end{theorem}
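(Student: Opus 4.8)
The plan is to realize the permutation with the four-move cycle $\langle u, r, d, \ell \rangle$ shown in Figures~\ref{fig:MatrixPermuteAI} and~\ref{fig:MatrixPermuteRD}, built on the following enabling principle: a particle that is alone in its grid row can be routed horizontally to any target column by a single obstacle, and dually a particle alone in its column can be routed vertically by a single obstacle. I would first isolate this as a lemma: if the $N$ particles occupy $N$ distinct rows inside a clear rectangle and the only obstacles in those rows lie strictly to the right, then the command $r$ stops the particle of each row exactly at the cell left of its obstacle, independently of every other row, and symmetrically for distinct columns under $d$. The up-move will make every particle alone in its row, the right-move will give each a private column, the down-move will send each to its target row, and the left-move will collect them into $B$; since every obstacle is addressed to a single lane in its phase, each phase uses at most $N$ obstacles.

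Concretely, I would place $A$ in the leftmost $a_c$ columns and put one ceiling obstacle above each column, so that $u$ stops each column's stack in its own vertical band and the $N$ particles come to rest in $N$ distinct rows. I then fix a global order on the particles by their destination in $B$ (row-major) and, in phase $r$, place one obstacle per occupied row routing the particles rightward into a fresh block of $N$ columns placed entirely to the right of column $a_c$ (so all motion is rightward); this makes the columns distinct and, crucially, fixes the correct left-to-right order within each target row of $B$. In phase $d$ each particle is now alone in its column, so one obstacle per column sends it down to the grid row reserved for its target row of $B$, with particles sharing a target row landing in a common grid row at distinct columns. Finally $\ell$ packs each such grid row flush against the left wall; the order chosen during $r$ guarantees each row of $B$ is reassembled correctly, so this last phase needs no further obstacles.

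For the resource bounds I would argue as follows. The working area needs $N$ rows for the spread and $N$ columns for the private-column block, together with separating lanes so that an obstacle dedicated to one phase never sits inside an active lane of another; budgeting one lane for particles, one for stoppers, and one for clearance per unit yields the side length $3N+1$ and hence area $(3N+1)^2$. Summing at most $N$ obstacles in each of the four phases, plus one boundary/anchoring obstacle, gives the stated $4N+1$ bound. For the time bound, each of the four commands completes once the farthest-moving particle has crossed the region, a distance at most $3N$, so at speed $v$ the four moves together take at most $4\cdot 3N/v = 12N/v$.

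The hard part will be the global non-interference argument, because all obstacles are present during every move. I must show that the ceiling obstacles of phase $u$, the row obstacles of phase $r$, and the column obstacles of phase $d$ never prematurely block a particle in a phase for which they were not intended --- for instance, a phase-$r$ stopper sitting in a column some particle must later descend, or a ceiling lying to the right of a particle during the right-move. I would resolve this precisely by the factor-$3$ spacing, assigning disjoint row-lanes to particles versus horizontal stoppers and disjoint column-lanes to particles versus vertical stoppers, and then verifying case by case that in each of the four moves the first obstacle met by every particle along its direction of travel is the one intended for it. Establishing this disjointness and ``first-hit'' property for all four moves simultaneously is the only genuinely delicate step; given the lemma, the correctness of each isolated phase and the counting are routine.
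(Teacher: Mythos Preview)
Your proposal is correct and follows essentially the same construction as the paper: the identical four-move cycle $\langle u,r,d,\ell\rangle$, with $u$ using per-column ceilings to give every particle its own row, $r$ using one obstacle per row to assign unique columns (with spacing to avoid later collisions), $d$ dropping each particle to its target row, and $\ell$ packing the rows against a left wall. The paper's proof is terser---it dispenses with the non-interference issue in a single clause (``leave an empty column between each obstacle'')---whereas you rightly flag the global first-hit/disjoint-lane verification as the only genuinely delicate step; but the underlying idea and the resource accounting are the same.
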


\begin{proof}
The reader is referred to Fig.~\ref{fig:MatrixPermuteAI} and \ref{fig:MatrixPermuteRD}
for examples.  The move sequence is $\langle u,r,d,\ell \rangle$. The lower-left particle starts at $(0,0)$.

{\bf Move~1:} We place $a_c$ obstacles, one for each column of $A$, spaced vertically $a_r-1$ units apart, such that moving $u$ spreads the particle array into a staggered vertical line. Each particle now has its own row.
{\bf Move~2:} We place $N$ obstacles to stop each particle during the move~$r$.   Because each particle has a unique row, it can be stopped at any arbitrary column by its obstacle. We leave an empty column between each obstacle to prevent collisions during the next move.
{\bf Move~3:} We place $N$ particles to stop particles during the move $d$, which arranges the particles in their final rows.  These rows are spread in a staggered horizontal line.  
{\bf Move~4:} We place $a_r$ obstacles in a vertical line from $(-1,1)$ to $(-1,a_c)$. Moving $\ell$ stacks the staggered rows into the desired permutation and returns the array to the initial position.
\end{proof}


By reapplying the same permutation enough times, we can return to the original configuration.  The permutations shown in Fig.~\ref{fig:MatrixPermuteAI} return to the original image in two cycles, while Fig.~\ref{fig:MatrixPermuteRD} requires 740 cycles.  In fact, any permutation of $N$ elements will return to its original position after it is repeated a finite number of times \cite{DummitFoote}. 

For a two-color image, we can always construct a permutation that resets in 2 cycles. 
If the matrix has only two colors then for each entry in the matrix a permutation of the particles will either flip the color or keep the color constant in that given entry. If the permutation flips the color of a particular entry, then doing the permutation twice will flip this entry back to its original color. If the permutation keeps the color of a particular entry constant, then doing the permutation twice will also preserve the color of that entry. Performing the permutation twice always results in the original matrix. Such a permutation is an \emph{involution}, a function that is its own inverse. An involution often does not exist for permutations on images with more than two colors.


\subsection{A Workspace for Arbitrary Permutations}
\label{subsec:arbitrary}

Theorem~\ref{thm:ArbPermutationUsingObstacles} can be exploited to generate larger sets
of permutations or even all possible permutations. 
There is a tradeoff between the number
of introduced obstacles and the number of moves required for realizing a permutation.

We start with obstacle sets that require only a small number of moves.

\begin{theorem}
\label{thm:kPermutationsInLogkMoves}
      For an arbitrary set of $k$ permutations of $N$ particles, we can construct
      a set of $O(kN)$ obstacles, such that we can switch from a start arrangement
      into any of the $k$ permutations using at most $O(\log k)$ force-field moves.
\end{theorem}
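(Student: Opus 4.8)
The plan is to combine a binary \emph{selection tree} with the single-permutation gadget of Theorem~\ref{thm:ArbPermutationUsingObstacles}. First I would pad the collection to a power of two, replacing $k$ by $k' = 2^{\lceil \log_2 k\rceil} \le 2k$ and filling the extra slots with the identity permutation; this preserves both asymptotic bounds. I would then split the move budget into a \emph{routing phase} of $O(\log k)$ moves that delivers the entire $N$-particle block to one of $k'$ designated \emph{leaf regions}, followed by a fixed permutation phase of just four moves $\langle u,r,d,\ell\rangle$. The leaf that the block reaches is determined entirely by the binary choices made during routing, and at each leaf I would install a copy of the Theorem~\ref{thm:ArbPermutationUsingObstacles} obstacle gadget realizing the corresponding target permutation $\pi_j$. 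The key point is that after routing, all $N$ particles sit in a single leaf region, so the global four-move permutation sequence acts nontrivially only on that leaf's gadget, producing exactly $\pi_j$.

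For the routing phase I would reuse the ``staircase'' selection idea already exploited in the variable gadgets of the NP-hardness proof: a canonical sequence of the form $\langle d, \ell/r, d, \ell/r, \dots, d, \ell/r\rangle$ of length $2\lceil\log_2 k\rceil$. Each $d$ command drops the compact particle block down one level of a complete binary tree, and each $\ell/r$ command selects the left or right subtree at that level. At every internal node I would place a small funnel of obstacles (of size $O(N)$, since the block occupies $\Theta(N)$ cells) that catches the incoming block and channels it, compact and correctly oriented, into exactly one of its two children according to the horizontal command. Because the tree has depth $\lceil\log_2 k\rceil$ and $O(k)$ internal nodes, the horizontal branch word $(b_1,\dots,b_{\lceil\log_2 k\rceil}) \in \{\ell,r\}^{*}$ addresses the leaves bijectively, and the block lands at the unique leaf encoded by those bits.

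I would then concatenate the routing word for leaf $j$ with the \emph{same} four-move sequence $\langle u,r,d,\ell\rangle$ for every $j$: it is the leaf gadgets' obstacle layouts, not the command sequence, that differ, so the identical four moves realize different permutations in different leaves, and only the occupied leaf contributes. Counting resources: the tree has $O(k)$ nodes each using $O(N)$ obstacles, and the $k'$ leaf gadgets use $O(N)$ obstacles each, for a total of $O(kN)$ obstacles inside a polynomially bounded bounding box; the total command length is $2\lceil\log_2 k\rceil + 4 = O(\log k)$. Correctness follows because routing is deterministic in the $\ell/r$ choices, the chosen leaf gadget inherits the correctness of Theorem~\ref{thm:ArbPermutationUsingObstacles}, and every unselected gadget is empty and hence inert.

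The main obstacle will be the geometric bookkeeping of the routing nodes rather than any deep combinatorial difficulty. Concretely, I must show that a whole block of $N$ particles (not a single token, as in the variable gadget) can be split at each branch while staying \emph{compact} and in the canonical corner position expected by the leaf gadget, and that sibling subtrees are spaced far enough apart that their funnels never interfere. A second subtlety is ensuring that the final $u$ of the permutation phase, applied globally, spreads the block upward into the selected leaf gadget's own obstacles rather than driving it back up the routing tree; I would resolve this by leaving vertical clearance between each leaf and the tree so that the gadget's first row of spreading obstacles intercepts the block before it can re-enter the tree. Getting these clearances and funnel widths right is what pins down the constants hidden in the $O(kN)$ obstacle count and the polynomial space bound, but none of it requires anything beyond careful placement of $O(N)$ obstacles per node.
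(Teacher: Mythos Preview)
Your proposal is correct and follows essentially the same approach as the paper: route the particle block through a depth-$\lceil\log_2 k\rceil$ binary selection tree via $\ell/r$ choices, landing at one of $k$ leaves each equipped with a Theorem~\ref{thm:ArbPermutationUsingObstacles} gadget. The paper's proof is far terser (one sentence plus a figure), using the specific sequence $\langle r, d, (r/\ell), d, \ldots, d, (r/\ell), d, \ell, u\rangle$ rather than your cleaner separation into a routing phase followed by $\langle u,r,d,\ell\rangle$, and it omits the geometric bookkeeping (block compactness, sibling clearance, preventing re-entry into the tree on the final $u$) that you rightly flag as the only real work.
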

\begin{proof} See Fig.~\ref{fig:ChooseKPermutations}.
      Build a binary tree of depth $\log k$ for choosing between the permutations
      by a sequence of $\langle r$, $d$, $(r/\ell)$, $d$, $(r/\ell)$, \dots,
$d$, $(r/\ell)$, $d$, $\ell$, $u \rangle$ with $\log k$ decisions between $r$ and $\ell$,
from the initial prefix $\langle r,d \rangle$ to the final suffix $\langle
d,\ell,u \rangle$.  This gets the particles to the set of obstacles for
performing the appropriate permutation.
\end{proof}

\begin{figure}[t]
\centering
\href{http://www.youtube.com/watch?v=3tJdRrNShXM}{
\begin{overpic}[width=1.0\columnwidth]{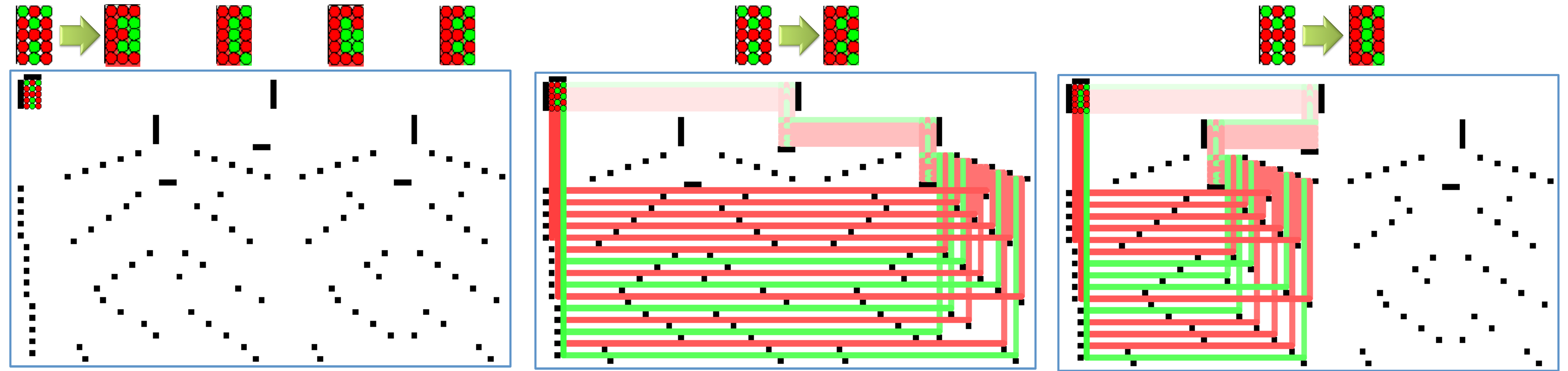}
\end{overpic}}
\caption{
\label{fig:ChooseKPermutations}
\href{http://www.youtube.com/watch?v=3tJdRrNShXM}{For any set of $k$ fixed, but arbitrary permutations of $N$ particles, we can construct
      a set of $O(kN)$ obstacles, such that we can switch from a start arrangement
      into any of the $k$ permutations using at most $O(\log k)$ force-field moves. Here $k=4$ and `A' is transformed into `B', C', `D', or `E' in eight moves: $\langle r,d,(r/\ell),d,(r/\ell),d,\ell,u \rangle$.}
}
\end{figure}


\begin{corollary}
\label{cor:NfactObstacles}
      For any $N$ and an arbitrary but fixed $\varepsilon > 0$, we can construct a set of $(N!)^\varepsilon$ obstacles such that any permutation of $N$ particles
      can be achieved by at most $O(N \log N)$ force-field moves.
\end{corollary}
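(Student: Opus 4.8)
The plan is to combine Theorem~\ref{thm:kPermutationsInLogkMoves} with a decomposition of an arbitrary permutation into a constant number of ``choices,'' each drawn from a modestly sized pool. Observe first that applying Theorem~\ref{thm:kPermutationsInLogkMoves} naively with $k=N!$ already yields $O(\log N!)=O(N\log N)$ moves, but it uses $O(N!\cdot N)$ obstacles, far more than the allowed budget $(N!)^\varepsilon$. The whole point is therefore to trade one enormous selector for a short series of small selectors whose effects multiply.

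Concretely, I would chain $t$ copies of the selector gadget of Theorem~\ref{thm:kPermutationsInLogkMoves} in series, feeding the output arrangement of one stage as the input of the next. This is legitimate because the permutation gadget of Theorem~\ref{thm:ArbPermutationUsingObstacles} returns the particle array to the same cells it started in; only the identities of the particles are permuted, so the output of stage $i$ is a valid start arrangement for stage $i+1$ (up to routine geometric routing between stages). If stage $i$ is built from a fixed set $P_i\subseteq S_N$ of size $s_i=|P_i|$, then the series realizes exactly the compositions $\pi_t\circ\cdots\circ\pi_1$ with $\pi_i\in P_i$, using $\sum_i O(\log s_i)$ moves and $\sum_i O(s_i N)$ obstacles.

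The heart of the argument is therefore a combinatorial decomposition: choosing the pools $P_1,\dots,P_t$ so that the set product $P_t\cdots P_1$ equals all of $S_N$ while each $s_i$ stays around $(N!)^{1/t}$. I would obtain this from the factorial (Lehmer) number system: there are coset transversals $R_1,\dots,R_{N-1}$ with $|R_k|=k+1$ such that every permutation factors as $r_{N-1}\cdots r_1$ with $r_k\in R_k$, so $R_{N-1}\cdots R_1=S_N$. Grouping the indices $1,\dots,N-1$ greedily into $t=O(1/\varepsilon)$ contiguous blocks, each with index-product at most $(N!)^{1/t}\cdot N$, and letting $P_i$ be the set of products of the transversals in block $i$, gives $P_t\cdots P_1=S_N$ with $s_i\le (N!)^{1/t}N=(N!)^{\varepsilon/2}N$ (taking $t\approx 2/\varepsilon$).

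Finally I would verify that this choice meets both budgets simultaneously. The move count is $\sum_i O(\log s_i)=O\big(\log\prod_i s_i\big)+O(t)\le O(\log(N!\cdot N^t))+O(t)=O(N\log N)$, since $t=O(1/\varepsilon)$ is constant; crucially the dependence on $\varepsilon$ cancels. The obstacle count is $\sum_i O(s_i N)=O\big(t\,(N!)^{\varepsilon/2}N^2\big)$, which is below $(N!)^\varepsilon$ for all large $N$ because $(N!)^{\varepsilon/2}$ dwarfs the polynomial factor $tN^2$. I expect the main obstacle to be the decomposition lemma and its parameter balancing---getting a product factorization of $S_N$ into $O(1/\varepsilon)$ factor sets of size $(N!)^{\Theta(\varepsilon)}$ and checking that the resulting move and obstacle bounds hold at once; the geometric chaining and routing of the gadgets is routine given Theorems~\ref{thm:ArbPermutationUsingObstacles} and~\ref{thm:kPermutationsInLogkMoves}.
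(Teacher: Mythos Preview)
Your proposal is correct and in fact supplies an argument that the paper's own proof lacks. The paper's proof is a single line: apply Theorem~\ref{thm:kPermutationsInLogkMoves} with $k=(N!)^\varepsilon/N$, observing that $\log k=\varepsilon\log(N!)-\log N=O(N\log N)$ and that the obstacle count is $O(kN)=O((N!)^\varepsilon)$. But Theorem~\ref{thm:kPermutationsInLogkMoves} only lets you reach $k$ pre-selected permutations, and for $\varepsilon<1$ this is strictly fewer than $N!$; the paper does not say how the remaining permutations are achieved. Read literally, the paper's argument establishes the move and obstacle bounds but not the ``any permutation'' clause.

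Your chaining construction---composing $t=O(1/\varepsilon)$ selector stages whose pools $P_i$ multiply to all of $S_N$ via the Lehmer (factorial-base) coset decomposition---is precisely the missing ingredient. The observation that each permutation gadget returns the array to the same footprint (only the labels permuted) is what makes serial composition legitimate, and your parameter balancing (grouping the $R_k$ into $O(1/\varepsilon)$ blocks of index-product roughly $(N!)^{\varepsilon/2}$) checks out: the move count telescopes to $O(\log N!)=O(N\log N)$ and the obstacle count $O(t\cdot(N!)^{\varepsilon/2}N^2)$ is dominated by $(N!)^\varepsilon$ for large $N$. So your route is genuinely different from, and more complete than, the paper's terse derivation; it trades a constant number of stages for full coverage of $S_N$ within the stated budgets.
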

\begin{proof} This follows from Theorem \ref{thm:kPermutationsInLogkMoves}. With $k=(N!)^{\varepsilon}/N$, $\log k$ becomes $\varepsilon \log (N!)- \log N$, i.e., $O(N\log N)$.
\end{proof}

Now we proceed to more economical sets of obstacles, with arbitrary permutations realized by
clockwise and counterclockwise move sequences. We make use of the following easy lemma, 
which shows that two base permutations are enough to generate any desired rearrangement.

\begin{lemma}\label{lemma:TwoPermutationsGeneratesUniversalPermutations}
      Any permutation of $N$ objects can be generated by the
      two base permutations $p=(1,2)$ and $q=(1,2,\ldots, N)$.
      Moreover, any permutation can be generated by a sequence
      of length at most $N^2$ that consists of $p$ and $q$.
\end{lemma}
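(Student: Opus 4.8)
The plan is to split the statement into its two halves---generation and word-length---and to handle them with a conjugation identity and a sorting argument, respectively. First I would record how $q$ acts by conjugation: for any transposition $(a,b)$ we have $q(a,b)q^{-1}=(q(a),q(b))$, so that $q^{k}\,p\,q^{-k}=(1+k,\,2+k)$ with indices read modulo $N$ in $\{1,\dots,N\}$. Letting $k$ range over $0,1,\dots,N-1$ therefore produces every cyclically adjacent transposition $(1,2),(2,3),\dots,(N-1,N),(N,1)$ as an explicit word in $p$ and $q$. It is standard that the adjacent transpositions $(i,i+1)$ already generate the full symmetric group $S_N$ (every permutation is a product of adjacent swaps, e.g.\ by bubble sort), so $p$ and $q$ generate every rearrangement of the $N$ objects, settling the first assertion.

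For the quantitative claim I would realize an arbitrary target $\sigma$ by \emph{sorting} rather than by substituting each adjacent transposition in isolation. The naive substitution $(i,i+1)=q^{i-1}p\,q^{-(i-1)}$, applied to a bubble-sort decomposition of $\sigma$ into up to $\binom{N}{2}$ adjacent transpositions, spends up to $\Theta(N)$ copies of $q$ per transposition and hence $\Theta(N^{3})$ letters overall---far too many. Instead I would run a single \emph{cyclic} bubble sort: picture the particles on a circular conveyor with one swap station at positions $1,2$, assign to the particle destined for position $j$ the key $j$, and repeatedly rotate the belt by one ($q$), swapping the two particles at the station ($p$) exactly when their keys are in the wrong order for $\sigma$. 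One full revolution ($N$ applications of $q$, with at most one $p$ per position) constitutes one bubble-sort pass over all $N$ cyclically adjacent pairs, and $q^{N}=\mathrm{id}$ restores the belt's offset between passes. Since bubble sort orders any arrangement within $N$ passes, and each pass costs at most $N$ rotations and at most $N$ swaps, this produces a word of length $O(N^{2})$ equal to $\sigma$.

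The main obstacle is therefore not the generation itself but pinning the length to $N^{2}$: the cyclic organization is precisely what keeps the rotations from accumulating to $\Theta(N^{3})$, because the $q$'s are shared across all swaps of a pass instead of being re-spent per transposition. To reach the stated bound rather than a larger multiple of $N^{2}$, I would account for the two contributions separately---the total number of swaps is the number of inversions of $\sigma$, at most $\binom{N}{2}<N^{2}/2$, while the rotations are bounded by the number of passes times $N$---and trim the procedure (for instance stopping once a pass performs no swap) so that the combined count stays within $N^{2}$. The one place demanding genuine care is verifying that this single-station cyclic sort, with ``out of order for $\sigma$'' read off the fixed target keys, is a well-defined and consistent comparator and that the process indeed terminates with the particles arranged as $\sigma$; the generation half and the asymptotic $O(N^{2})$ bound are otherwise routine.
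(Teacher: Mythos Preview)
The paper does not prove this lemma; it declares the proof ``elementary and left to the reader.'' So there is no argument to compare against, only your proposal to evaluate.

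Your generation half is correct: the conjugates $q^{k}pq^{-k}=(1{+}k,2{+}k)$ give all adjacent transpositions, and these already generate $S_N$.

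For the length bound, the termination concern you flag for the single-station cyclic sort is a genuine obstruction, not a technicality. On a cycle no linear order is consistent with all $N$ adjacent pairs: in the target $(1,2,\dots,N)$ the pair across the seam reads $(N,1)$, which your comparator declares out of order. In particular the sorted array is not a fixed point of your pass---from $(1,2,\dots,N)$, one step ``apply $q$, then apply $p$ if positions $1,2$ are out of order'' produces $(N,1,2,\dots,N{-}1)$ and then swaps to $(1,N,2,\dots,N{-}1)$---so ``run $N$ passes and stop'' does not land on $\sigma$. The natural repair is to suppress the comparison at the seam (one designated step per revolution), which turns each pass into an ordinary bubble-sort sweep over the $N-1$ non-seam adjacent pairs; then $N-1$ passes suffice. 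That costs at most $N(N-1)$ copies of $q$ and at most $\binom{N}{2}$ copies of $p$, hence at most $\tfrac{3}{2}N(N-1)$ letters. This is $O(N^{2})$---which is all the paper actually uses downstream in Theorem~\ref{thm:NsqMovesToSort}---but it exceeds $N^{2}$ for every $N\ge 4$, and your proposed tightening via inversion counting and early stopping does not close the gap on the reversal permutation, which forces all $N-1$ passes \emph{and} all $\binom{N}{2}$ swaps simultaneously. Nailing the literal constant~$1$ would require a sharper scheme than bubble sort; the paper does not supply one either.
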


The proof is elementary and left to the reader.
This allows us to establish the following result.

\begin{theorem}
\label{thm:NsqMovesToSort} 
      We can construct a set of $O(N)$ obstacles such that
      any $a_r\times a_c$ arrangement of $N$ particles can be rearranged into
      any other $a_r\times a_c$ arrangement $\pi$ of the same particles, using
      at most $O(N^2)$ force-field moves.
\end{theorem}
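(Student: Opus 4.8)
The plan is to combine Lemma~\ref{lemma:TwoPermutationsGeneratesUniversalPermutations} with the four-move gadget of Theorem~\ref{thm:ArbPermutationUsingObstacles}. The lemma guarantees that the two base permutations $p=(1,2)$ and $q=(1,2,\ldots,N)$ generate the full symmetric group, and that every target permutation $\pi$ factors as a product of at most $N^2$ of them. So it suffices to build a \emph{single} workspace whose obstacle set realizes both $p$ and $q$ as four-move cycles that each return the particles to a common home grid; executing the factorization of $\pi$ one factor at a time then produces the desired rearrangement. Crucially, the $N^2$ bound of the lemma and the constant cost of each four-move cycle are exactly what yield the claimed $O(N^2)$ move count, while the $O(N)$ obstacle count comes from the fact that each base permutation needs only $O(N)$ stopping obstacles.

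Concretely, I would place the home $a_r\times a_c$ grid in the center and reuse the staggering idea of Theorem~\ref{thm:ArbPermutationUsingObstacles} twice, once per direction of traversal. A \emph{clockwise} cycle $\langle u,r,d,\ell\rangle$ spreads the array upward into a staggered line, routes each particle rightward to the column dictated by $q$, drops it down, and collects it against the left wall of the home grid, so the net effect is the cyclic shift $q$ with the particles returned home. A \emph{counterclockwise} cycle $\langle u,\ell,d,r\rangle$ spreads upward, routes leftward according to $p$, drops down, and collects against the right wall, giving the transposition $p$ and again resetting to home. Because the two cycles spread and permute in opposite half-planes (to the right of home for $q$, to the left for $p$), the $O(N)$ obstacles of each gadget can coexist in one workspace, keeping the total at $O(N)$.

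Given $\pi$, I would first compute a factorization $\pi = s_1 s_2 \cdots s_t$ with $t\le N^2$ and each $s_j\in\{p,q\}$ using the lemma, then run the corresponding clockwise or counterclockwise four-move cycle for each factor in turn. Since every factor costs exactly four moves and the particles lie on the exact home grid after each factor, the whole rearrangement uses at most $4N^2=O(N^2)$ force-field moves with $O(N)$ obstacles. Correctness of the concatenation then follows by a straightforward induction on the number of factors, each step invoking the correctness of the single-cycle gadget.

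The main obstacle is verifying \emph{non-interference} and \emph{exact reset}: I must ensure that when a clockwise cycle runs, the obstacles installed for the counterclockwise gadget never block, divert, or trap a particle, and conversely, and that each cycle restores the array to the precise home configuration rather than a shifted copy. This amounts to choosing the vertical spreading heights and the horizontal collection walls so that the two gadgets are spatially disjoint except at the shared home grid, and checking the boundary behavior of the stopping obstacles just as in Theorem~\ref{thm:ArbPermutationUsingObstacles}. Once independence of the two cycles and exact reset are established, the induction closes and the bounds follow immediately.
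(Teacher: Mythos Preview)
Your proposal is correct and follows essentially the same approach as the paper: realize the two generators $p=(1,2)$ and $q=(1,2,\ldots,N)$ of Lemma~\ref{lemma:TwoPermutationsGeneratesUniversalPermutations} by two $O(N)$-obstacle gadgets from Theorem~\ref{thm:ArbPermutationUsingObstacles} sharing a common home grid, one actuated by a clockwise four-move cycle and the other by a counterclockwise cycle, then concatenate at most $N^2$ such cycles. The only cosmetic differences are that the paper assigns $p$ to the clockwise cycle $\langle u,r,d,\ell\rangle$ and $q$ to the counterclockwise cycle $\langle r,u,\ell,d\rangle$ (you swap the roles and use $\langle u,\ell,d,r\rangle$), and the paper relegates the non-interference/reset verification you highlight to its figure rather than spelling it out.
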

\begin{proof}       See Fig.~\ref{fig:BubbleSort}.
      Use Theorem \ref{thm:ArbPermutationUsingObstacles} to build two sets of obstacles, one each for $p$ and $q$,
      such that $p$ is realized by the sequence $\langle u,r,d,\ell \rangle$ (clockwise)
      and $q$ is realized by $\langle r,u,\ell,d \rangle$ (counterclockwise).
      Then we use the appropriate sequence for generating $\pi$ in $O(N^2)$ moves.
\end{proof}

\begin{figure}[t]
\centering
\begin{overpic}[width=0.95\columnwidth]{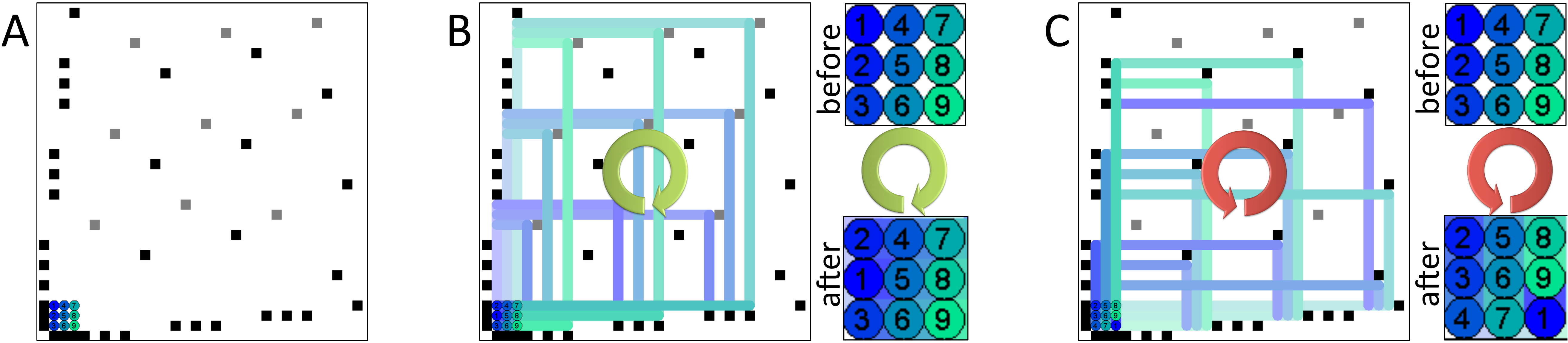}
\end{overpic}
\caption{
\label{fig:BubbleSort}
Repeated application of two base permutations can generate any permutation, when used in a manner similar to  {\sc Bubble Sort}.  The obstacles in (Fig.~\ref{fig:BubbleSort}A) generate the base permutation $p=(1,2)$ in the clockwise direction $\langle u,r,d,\ell \rangle$ (Fig.~\ref{fig:BubbleSort}B) and $q=(1,2,\ldots, N)$ in the counter-clockwise direction $\langle r,u,\ell,d \rangle$ (Fig.~\ref{fig:BubbleSort}C).
}
\end{figure}

%

Using a larger set of base permutations allows us to reduce the number of necessary
moves. Again, we make use of a simple base set for generating arbitrary permutations.

\begin{lemma}\label{lemma:NbasePermutations}
      Any permutation of $N$ objects can be generated by the
      $N$ base permutations $p_1=(1,2), p_2=(1,3),\dots, p_{N}=(1,(N-1))$ and $q=(1,2,\ldots, N)$.
      Moreover, any permutation can be generated by a sequence
      of length at most $N$ that consists of $p_i$ and $q$.
      \end{lemma}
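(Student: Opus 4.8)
The plan is to prove the two assertions separately: first that $\{p_i\}\cup\{q\}$ generates $S_N$, and then the quantitative bound of $N$ on the word length, which is where all the real content lies. For generation, I would simply note that the base set already contains $(1,2)$ together with $q=(1,2,\dots,N)$, so the subgroup they generate contains $\langle(1,2),q\rangle=S_N$ by Lemma~\ref{lemma:TwoPermutationsGeneratesUniversalPermutations}; alternatively, the star transpositions $(1,k)$ alone already generate $S_N$, since every transposition factors as $(i,j)=(1,i)(1,j)(1,i)$ and transpositions generate $S_N$. Either way the first sentence requires no new argument.

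For the length bound I would use the disjoint-cycle decomposition of the target permutation together with the cyclic shift $q$ as a repositioning tool. The decisive primitive is that a single star transposition moves any chosen element directly to the distinguished ``hub'' position $1$, which is exactly what replaces the $O(N)$ adjacent-transposition bubbling behind the $O(N^2)$ bound of Lemma~\ref{lemma:TwoPermutationsGeneratesUniversalPermutations} by a selection-sort-style realization with linearly many moves. Concretely, a cycle through the hub, written $(1\,a_2\,\cdots\,a_\ell)$, is realized by the telescoping identity $(1\,a_2\,\cdots\,a_\ell)=(1\,a_\ell)(1\,a_{\ell-1})\cdots(1\,a_2)$ at a cost of $\ell-1$ star transpositions, and each of the two realizations of Theorem~\ref{thm:ArbPermutationUsingObstacles} (clockwise and counterclockwise) can supply the needed factors as four-move gadgets. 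A cycle not containing $1$ is first brought into contact with the hub by one application of $q$ so that the same telescoping applies; the auxiliary fact $(1,2)\,q=(2,3,\dots,N)$, a two-move rotation of the tail $\{2,\dots,N\}$, is the natural tool for splicing several cycles onto the hub cheaply.

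I expect the delicate part to be the accounting that drives the total down to exactly $N$, rather than the roughly $2N$ that a naive ``swap-to-front, then rotate'' selection sort produces. The difficulty concentrates on permutations with many short disjoint cycles -- fixed-point-free involutions such as $(1,2)(3,4)\cdots$ are the extremal-looking case -- where treating each $2$-cycle independently already overshoots $N$. The resolution is to use a single application of $q$ to \emph{merge} several cycles at once: for instance $q^{-1}(1,2)(3,4)(5,6)=(2,6,4)$ collapses three $2$-cycles into one $3$-cycle, so that $\pi=q\,(2,6,4)$ is a positive word of length only $5<6$. Organizing this merging so that rotations are shared across cycles rather than spent per cycle is the crux.

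A second subtlety, which the argument must respect, is that only the positive power $q$ is a generator, whereas $q^{-1}=q^{N-1}$ would by itself cost $N-1$ moves; every repositioning must therefore be arranged to use $q$ in the cheap direction. The cleanest way I see to package all of this is a single amortization lemma asserting that the constructed sequence can be chosen so that each move fixes at least one further element in its final cycle position. Then the potential given by the number of not-yet-placed elements, which is at most $N$, strictly decreases at every step, yielding a word of length at most $N$ and completing the proof.
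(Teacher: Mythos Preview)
The paper itself supplies no argument here: it declares the proof ``again completely straightforward and left to the reader.'' So there is no approach to compare against, only the question of whether your proposal actually closes the gap the paper leaves open.

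Your treatment of the first assertion (generation) is fine and needs no further comment. For the length bound, you correctly identify both the natural mechanism (telescoping cycles through the hub via star transpositions) and the genuine obstruction (permutations with many short cycles avoiding~$1$, where star transpositions alone cost roughly $3N/2$). Your worked example with $q^{-1}(1,2)(3,4)(5,6)=(2,6,4)$ is exactly the right phenomenon. However, the proof then collapses into an unproven assertion: you posit an ``amortization lemma'' guaranteeing that the sequence can be chosen so that each move places at least one new element permanently, and you simply state that this yields length at most~$N$. That lemma \emph{is} the proof; everything before it is setup. You give no construction realizing it, and the naive reading is already suspect: a single application of $q$ displaces every element, so it cannot literally ``fix one further element in its final position'' unless you are measuring progress against some shifted frame that you have not specified. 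Likewise, your remark that one must avoid $q^{-1}$ (since $q^{-1}=q^{N-1}$ costs $N{-}1$) is correct but only sharpens the difficulty you have not resolved --- your own example expresses $\pi$ as $q\cdot(2,6,4)$, i.e., uses the \emph{forbidden} direction, and converting this to positive powers blows the budget.

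In short: the plan is sound through the diagnosis of the hard case, but the decisive step is missing. To turn this into a proof you must either exhibit an explicit algorithm (for instance, a carefully interleaved word $p_{i_1} q\, p_{i_2} q \cdots$ with a concrete invariant showing termination in at most $N$ factors) or prove the amortization claim directly with a well-defined potential that genuinely drops on every generator, including~$q$.
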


The proof is again completely straightforward and left to the reader.

\begin{theorem}\label{thm:NlogNmovesWithN2Obstacles}
      We can construct a set of $O(N^2)$ obstacles such that
      any $a_r \times a_c$ arrangement of $N$ particles can be rearranged into
      any other $b_r \times b_c$ arrangement $\pi$ of the same particles, using
      at most $O(N \log N)$ force-field moves.
      \end{theorem}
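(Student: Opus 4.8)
The plan is to combine the selection construction of Theorem~\ref{thm:kPermutationsInLogkMoves} with the economical generating set of Lemma~\ref{lemma:NbasePermutations}, reusing the latter's small base of permutations to drive down the move count. The key observation is that Lemma~\ref{lemma:NbasePermutations} supplies only $O(N)$ base permutations --- the transpositions $p_1,\dots,p_{N-1}$ together with the full cycle $q$ --- and that every target permutation $\pi$ factors into a sequence of at most $N$ of them. This is exactly the trade-off against Lemma~\ref{lemma:TwoPermutationsGeneratesUniversalPermutations}, which offers only two generators but forces sequences of length $\Theta(N^2)$ (as exploited in Theorem~\ref{thm:NsqMovesToSort}); the richer generating set is what lets us spend more obstacles in exchange for far fewer moves.

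First I would instantiate Theorem~\ref{thm:kPermutationsInLogkMoves} with $k = O(N)$ equal to the size of the base set from Lemma~\ref{lemma:NbasePermutations}. Each individual base permutation is realized in place by a four-move gadget of Theorem~\ref{thm:ArbPermutationUsingObstacles} on $O(N)$ obstacles, and the binary selection tree of Theorem~\ref{thm:kPermutationsInLogkMoves} routes the particles down to whichever gadget is desired and back up, at a cost of $O(\log k) = O(\log N)$ moves per application. The total obstacle count is then $O(kN) = O(N^2)$, matching the claimed bound; any reshape from $a_r\times a_c$ to $b_r\times b_c$ is absorbed into the linearized indexing of the $N$ positions and handled by the gadgets themselves.

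Next I would apply this structure iteratively. Given $\pi$, I decompose it via Lemma~\ref{lemma:NbasePermutations} into at most $N$ base permutations and execute them one at a time: select the next base permutation through the tree, apply its four-move gadget, and return to the canonical start configuration before selecting the next. Since each base permutation costs $O(\log N)$ moves and there are at most $N$ of them, the entire sequence uses $O(N\log N)$ moves, as required.

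The main obstacle I anticipate is \emph{resettability}: for the iteration to be correct, every base-permutation gadget must return all particles to exactly the canonical $a_r\times a_c$ start region that the selection tree expects as input, so that each subsequent selection begins from a clean state with no particle lost, trapped, or displaced. Theorem~\ref{thm:ArbPermutationUsingObstacles} already returns its array to the initial position after four moves, which is precisely the invariant needed; the remaining work is to verify that composing the selection tree with the permutation gadgets preserves this invariant across repeated passes, and that duplicate or interjected commands during routing cannot corrupt an intermediate configuration. Establishing this clean interface between the two earlier constructions is the crux of the argument; once it holds, the counting of obstacles and moves is immediate.
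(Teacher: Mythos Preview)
Your proposal is correct and follows essentially the same approach as the paper: instantiate Theorem~\ref{thm:kPermutationsInLogkMoves} with the $k=N$ base permutations of Lemma~\ref{lemma:NbasePermutations} (each built via Theorem~\ref{thm:ArbPermutationUsingObstacles}), giving $O(N^2)$ obstacles, and then apply at most $N$ base permutations at $O(\log N)$ moves each. Your explicit discussion of resettability is a point the paper's proof leaves implicit, but the underlying construction and counting are identical.
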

\begin{proof}   
 Use Theorem \ref{thm:ArbPermutationUsingObstacles} to build $N$ sets of obstacles, one each for $p_1,\ldots,p_{N-1}, q$.
      Furthermore, use  Lemma~\ref{lemma:NbasePermutations} for generating all permutations with at most
      $N$ different of these base permutations, and Theorem~\ref{thm:kPermutationsInLogkMoves} for switching between
      these $k=N$ permutations. Then we can get $\pi$ with at most $N$ cycles, each consisting
      of at most $O(\log N)$ force-field moves.
      \end{proof}

This is the best possible with respect to the number of moves in the following sense:

\begin{theorem}\label{thm:PermutationOptimalNumMoves}
       Suppose we have a set of obstacles such that any permutation of a rectangular arrangement
       of $N$ particles can be achieved by at most $M$ force-field moves. Then $M$ is at least $\Omega(N \log N)$.
\end{theorem}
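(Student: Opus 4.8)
The plan is to prove this lower bound by a direct counting (information-theoretic) argument: the set of configurations reachable from a fixed start using a bounded number of moves is small, so it cannot possibly realize all $N!$ permutations unless $M$ is large. The crucial observation is that, for a fixed workspace with fixed obstacles, the dynamics are completely deterministic — once we fix the initial rectangular arrangement and a command sequence $\mathbf{m}\in\{u,d,r,\ell\}^{*}$, the resulting configuration is uniquely determined. Thus the number of distinct target configurations we can hope to hit is capped by the number of available command sequences, no matter how cleverly the obstacles are placed.

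First I would count command sequences of length at most $M$. Since each move is one of four symbols $\{u,d,r,\ell\}$, the number of such sequences is at most $\sum_{i=0}^{M} 4^{i} < \tfrac{4}{3}\,4^{M}$. (One could shave this to $4\cdot 3^{M-1}$ by noting that repeating the same move has no effect, but this only changes the constant.) Next I would argue injectivity of the realization map. By hypothesis, for each of the $N!$ permutations $\pi$ of the distinguishable particles there is a sequence $\mathbf{m}_{\pi}$ of length at most $M$ realizing $\pi$. If two distinct permutations $\pi\neq\pi'$ were realized by the same sequence, then applying that sequence to the common starting arrangement would produce a single configuration, contradicting $\pi\neq\pi'$. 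Hence $\pi\mapsto\mathbf{m}_{\pi}$ is an injection from the $N!$ permutations into the set of command sequences of length at most $M$.

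Combining the two bounds gives $N! \le \tfrac{4}{3}\,4^{M}$, hence $M \ge \log_{4}(N!) - O(1)$. Finally, Stirling's estimate yields $\log(N!) = \Theta(N\log N)$, so $M = \Omega(N\log N)$, as claimed. Note that this matches the upper bound of Theorem~\ref{thm:NlogNmovesWithN2Obstacles}, so the two together pin down the optimal move complexity up to constants.

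I do not expect a genuinely hard step here; the argument is a clean counting bound. The only points requiring care are (i) making the determinism observation precise, so that the count of reachable configurations is provably bounded by the count of command sequences independently of the obstacle layout, and (ii) being careful to count sequences of length \emph{at most} $M$ rather than exactly $M$ (a harmless geometric-series factor). Everything else reduces to Stirling's formula.
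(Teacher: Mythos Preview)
Your argument is correct and is essentially the same information-theoretic counting argument the paper gives: at most $4^{M}$ command sequences must cover $N!$ permutations, forcing $M=\Omega(\log(N!))=\Omega(N\log N)$. The paper phrases it as each move partitioning the remaining permutations into at most four subsets, but this is just a different wording of the same bound.
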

\begin{proof}
       Each permutation must be achieved by a sequence of force-field moves.
       Because each decision for a force-field move $\langle u,d,\ell,r  \rangle$
       partitions the remaining set of possible permutations into at most four different subsets,
       we need at least $\Omega(\log(N!))= \Omega(N \log N)$ such moves.
\end{proof}

\subsection{{\sc pspace}-Completeness}
\label{subsec:pspaceComplete}

In Section~\ref{sec:mazes}, we showed that the problem {\sc GlobalControl-ManyParticles} is com\-pu\-ta\-tio\-nal\-ly intractable in a particular
sense: given an initial configuration of movable particles and fixed obstacles, it is
NP-hard to decide whether {\em any} individual particle can be moved to a specified location. 
In the following, we show that minimizing the number of moves for achieving a desired goal configuration for {\em all} particles is {\sc pspace}-complete.

\begin{theorem}
  Given an initial configuration of (labeled) movable particles and fixed obstacles, it is
  {\sc pspace}-complete to compute a shortest sequence of force-field moves to achieve another (labeled) configuration.
\end{theorem}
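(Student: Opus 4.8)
The plan is to prove both directions through the natural decision version of the optimization statement: given the labeled start and goal configurations together with a bound $L$ written in binary, decide whether some command sequence of length at most $L$ transforms the start into the goal. Computing the shortest length then follows by binary search over $L$.

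For \emph{membership} in {\sc pspace}, I would observe that a configuration is fully described by the labeled positions of the $N$ particles, which takes only polynomial space, and that the effect of a single command $m\in\{u,d,r,\ell\}$ is computable in polynomial time by sliding each particle maximally. Hence I run a nondeterministic search that stores only the current configuration and a step counter bounded by $L$ (needing $O(\log L)$ bits), repeatedly guessing a command, applying it, and accepting upon reaching the goal within the budget. This uses polynomial space, so the problem lies in {\sc npspace} $=$ {\sc pspace} by Savitch's theorem, and the binary search computing the optimal length stays in {\sc pspace}.

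For \emph{hardness}, the crucial observation is that, under global control, a workspace in which the particles cannot interfere with one another behaves exactly like a collection of deterministic finite automata all reading the same input word: the obstacle field fixes, for each particle and each command, a deterministic successor position, so a command sequence is an input string and the labeled particles are tokens tracing synchronized runs. I would therefore reduce from {\sc dfa-intersection} (Kozen's problem of deciding whether finitely many DFAs accept a common word), which is {\sc pspace}-complete even over a small alphabet. Given DFAs $M_1,\dots,M_t$, I build $t$ mutually isolated regions, each walled off so its particle never leaves and never meets another, and—reusing the branching layouts of the variable gadget of Section~\ref{sec:mazes}—lay out obstacles in region $i$ so that the maximal-slide response to each logical symbol realizes the transition function of $M_i$, encoding each automaton symbol by a short fixed block of moves that resynchronizes all particles to a canonical sub-cell. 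Placing particle $i$ initially at the start state of $M_i$ and declaring the goal to be all particles at their respective accepting states, a command sequence solves the puzzle iff the corresponding word is accepted by every $M_i$, and a shortest solution corresponds to a shortest common word. Setting $L$ to a fixed multiple of the product of the state counts—which is still of polynomial bit-length, and which bounds the shortest common word whenever one exists—reduces plain nonemptiness to the bounded-length decision problem, yielding {\sc pspace}-hardness. (The nondeterministic-constraint-logic route hinted at for the reachability conjecture is an alternative, but the automaton view fits the labeled, shortest-sequence formulation far more directly.)

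The \emph{main obstacle} will be the faithful simulation of each transition function within the slide-to-maximum-extent dynamics. Because every command moves a particle until it is blocked rather than by a controlled increment, I must design each gadget so that (i) the intended block of moves for one symbol produces exactly the desired transition, (ii) no unintended transitions are triggered by superfluous commands, and (iii) all particles remain synchronized across all $t$ regions after each symbol—precisely the canonical-form bookkeeping that made the 3SAT reduction work, but now maintained indefinitely rather than for a single pass. Unlike the reset difficulty flagged for the single-particle reachability conjecture, isolating the particles removes any cross-gadget trapping, so the entire burden reduces to getting one reusable automaton gadget correct while keeping the construction polynomial in size.
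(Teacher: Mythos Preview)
Your membership argument is fine and in fact more explicit than the paper, which states completeness but only argues hardness. Your hardness route, however, is genuinely different from the paper's and, as written, has a real gap.

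The paper does \emph{not} isolate the particles or simulate automata. Instead it reuses the concrete permutation gadgets of Section~\ref{subsec:arbitrary}: by Theorem~\ref{thm:ArbPermutationUsingObstacles} one can build an obstacle field in which the clockwise quadruple $\langle u,r,d,\ell\rangle$ realizes a permutation $\pi_1$ of the labeled particle array and the counterclockwise quadruple $\langle r,u,\ell,d\rangle$ realizes $\pi_2$. Hardness then comes from Jerrum's result that, given two generators, finding a shortest word expressing a target permutation is {\sc pspace}-complete. The key structural lemma is that in this specific workspace any \emph{shortest} move sequence from the base arrangement back to a rectangular arrangement must decompose into complete CW/CCW quadruples, so optimal move sequences of length $4n$ correspond exactly to generator words of length $n$. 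Because the gadgets are already built and the quadruple-decomposition argument is short, the reduction is fully concrete.

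Your DFA-intersection idea is attractive, but the step you flag as ``the main obstacle'' is not a detail---it is the whole reduction, and it is not clear it can be carried out. With a \emph{single} particle and maximal slides, the four moves are individually idempotent and their compositions are heavily constrained (e.g., after $r$ the particle must sit immediately left of an obstacle; $u\!\cdot\! d\!\cdot\! u=u$ within a column segment; and so on), so it is far from obvious that an arbitrary transition function $\delta:[k]\times\Sigma\to[k]$ can be realized by a fixed block of moves inside a polynomial-size region. The variable gadget you cite is strictly one-pass and gives no reusable state. Moreover, because the theorem is about \emph{shortest} sequences, you must also show that no off-protocol command sequence reaches the goal configuration strictly faster than the intended symbol blocks---the paper handles the analogous issue by proving that any deviation from a CW/CCW quadruple either lengthens the sequence or irrecoverably destroys the array, and you would need a comparably tight argument for every one of your $t$ gadgets simultaneously. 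Until you exhibit a concrete, reusable single-particle DFA gadget with that robustness property, the hardness direction is not established.
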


\begin{proof}
The proof is largely based on a complexity result by Jerrum~\cite{j-cfmlg-85}, who considered the following problem:
Given a permutation group specified by a set of generators and a single target permutation $\pi$, which is
a member of the group, what is the shortest expression for the target permutation in terms of the generator? This problem
was shown to be {\sc pspace}-complete in~\cite{j-cfmlg-85}, even when the generator set consists of only two permutations, say, $\pi_1$ and $\pi_2$.

As shown in Subsection~\ref{subsec:arbitrary},  we can realize any matrix permutation $\pi_i$ of a rectangular arrangement of
particles by a set of obstacles, such that this permutation $\pi_i$ is carried out by a quadruple of force-field moves.
We can combine the sets of obstacles for the two different permutations $\pi_1$ and $\pi_2$, such that $\pi_1$
is realized by going through a clockwise sequence $\langle u, r, d, \ell\rangle$, while $\pi_2$ is realized by a counterclockwise
sequence $\langle r, u, \ell, d\rangle$. We now argue that a target permutation $\pi$ of the matrix can be realized by
a minimum-length sequence of $m$ force-field moves if and only if $\pi$ can be decomposed into a sequence of
a total of $n$ applications of permutations $\pi_1$ and $\pi_2$, where $m=4n$.

The ``if'' part is easy: simply carry out the sequence of $n$ permutations, each realized by a (clockwise or counterclockwise)
quadruple of force-field moves. For the ``only if'' part, suppose we have a shortest sequence of $m$ force-field moves to achieve permutation
$\pi$, and consider an arbitrary subsequence that starts from the base position in which the particles form a rectangular arrangement
in the lower left-hand corner. It is easy to see that a minimum-length sequence cannot contain two consecutive moves that are
both horizontal or both vertical: these moves would have to be be in opposite directions, and we could shorten the sequence by omitting
the first move.
Furthermore, by construction of the obstacle set, the first move must be $u$ or $r$. Now it is easy to check that
the choice of the first move determines the next three ones: $u$ must be followed by $\langle r, d, \ell\rangle$; similarly,
$r$ must be followed by $\langle u, \ell, d\rangle$. Any other choice for moves 2--4 would produce a longer overall sequence
or destroy the matrix by leading to an arrangement from which no recovery to a rectangular matrix is possible. Therefore, the overall sequence
can be decomposed into $m=4n$ clockwise or counterclockwise quadruples. As described, each of these quadruples represents either
$\pi_1$ or $\pi_2$, so $\pi$ can be decomposed into $n$ applications of permutations $\pi_1$ and $\pi_2$.
This completes the proof.
\end{proof}

Note that the result also implies the existence of solutions of exponential length, which can occur with polynomial space.
Binary counters are particular examples of such long sequences that are useful for many purposes.

\section{Limitations of Particle Logic}\label{sec:logic}

After considering the complexity of rearranging given arrangements of
particles, i.e., {\em external computation}, we now turn to using the particles
themselves for performing logic operations, i.e., {\em internal computation}.
We first establish the limitations of 1$\times$1 particles; details on designing the full range of
logic gates with the help of 2$\times$1 particles are described in the following Section~\ref{sec:Design}.

\subsection{Dual-Rail Logic and {\sc fan-out} Gates}
 In Section \ref{sec:mazes} we showed that given only obstacles and particles that move maximally in response to an input, we can construct a variety of logic elements.  
 These include variable gadgets that enable setting multiple copies of up to $n$ variables to be {\sc True} or {\sc False} (Fig.~\ref{fig:VariableGadget}) and 
   $m$-input {\sc or} and {\sc and} gates (Fig.~\ref{fig:3InputORandminputAnd}). 
   Unfortunately, we cannot build  {\sc not} gates because our system of particles and obstacles is conservative---we cannot create a new particle at the output when no particle is supplied to the input. A  {\sc not}  gate is necessary to construct a logically complete set of gates.  To do this, we rely on a form of \emph{dual-rail logic}, where both the state  and inverse ($A$ and $\bar{A}$) of each signal are propagated throughout the computation.  Dual-rail logic is often used in low-power electronics to increase the signal-to-noise ratio without increasing the voltage \cite{zimmermann1997low}.  With dual-rail logic we can now construct the missing  {\sc not} gate, as shown in Fig.~\ref{fig:ParticleLogic11}. The command sequence $\langle d,\ell,u,r\rangle$ inverts the input.  
   

\begin{figure}
\centering
\begin{overpic}[width =.85\columnwidth]{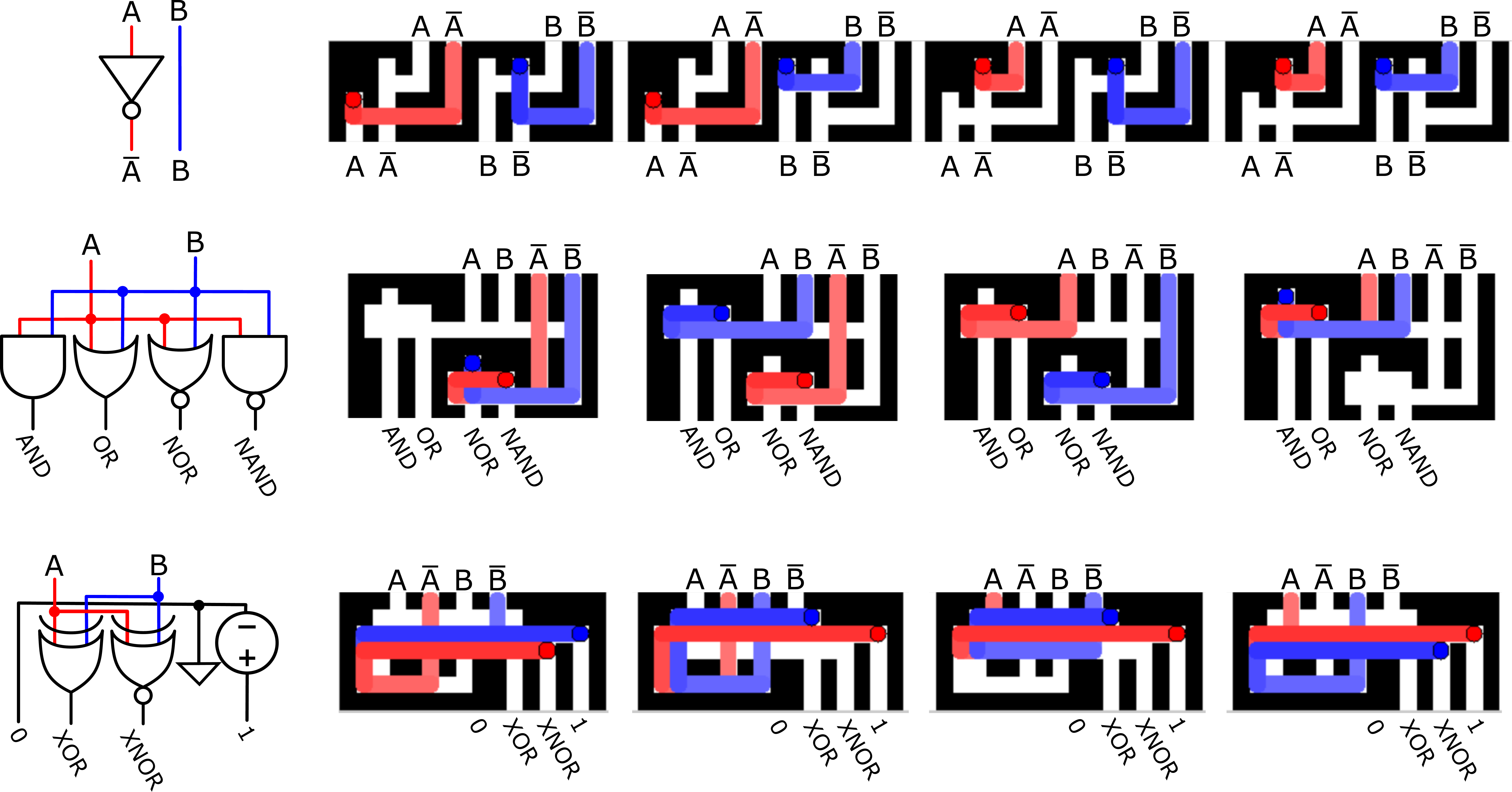}\end{overpic}
\caption{\label{fig:ParticleLogic11}Schematic and diagram of  dual-rail logic gates. Each gate employs the same clock sequence $\langle d,\ell,u,r \rangle$, the four inputs correspond to $A,\bar{A},B,\bar{B}$. 
The top row is a {\sc not} gate and a connector. 
The middle row is a universal logic gate whose four outputs are {\sc and, nand, or, nor}.  
The bottom row gate outputs the {\sc xor, xnor} of the inputs and constants 1 and 0.  \href{http://youtu.be/mJWl-Pgfos0}{See video at \url{http://youtu.be/mJWl-Pgfos0} for a hardware demonstration.}
}
\vspace{-1em}
\end{figure}

 We now revisit the  {\sc or} and {\sc and} gates
of~Fig.~\ref{fig:3InputORandminputAnd}, using dual-rail logic and the four
inputs $A,\bar{A},B,\bar{B}$.  The gate in the middle row of
Fig.~\ref{fig:ParticleLogic11} can simultaneously compute  {\sc and, or, nor,}
and {\sc nand},  using the same  command sequence $\langle d,\ell,u,r\rangle$ as
the {\sc not} gate.  Outputs can be piped for further logic using the
interconnections in Fig.~\ref{fig:ParticleLogic11}. Unused outputs can be piped
into a storage area and recycled for later use.


  Dual-rail devices open up new opportunities, including  {\sc xor} and {\sc xnor} gates, which are not conservative using single-rail logic.  This gate, shown in the bottom row of~Fig.~\ref{fig:ParticleLogic11} also outputs a constant 1 and 0.

    Consider the half adder shown in Fig.~\ref{fig:HalfAdder}.     With an {\sc and} and {\sc xor} we can compactly construct a half adder.  We are hindered by an inability to construct a {\sc fan-out} device. 
  The \emph{fan out} of a logic gate output is the number of gate inputs it can feed or connect to.  In the particle logic demonstrated thus far, each logic gate output could fan out to only one gate.  This is sufficient for \emph{sum of products} and \emph{product of sums}  operations in CPLDs (complex programmable logic devices) but insufficient for more flexible architectures.
 Instead, we must take any logical expression and create multiple copies of each input.  For example, a half adder requires only one {\sc xor} and one  {\sc and} gate, but our particle computation requires two $A$  and two $B$ inputs.
  In the rest of this section we prove the insufficiency of unit-sized particles for the implementation of {\sc fan-out} gates and design a {\sc fan-out} gate using $2\times 1$ particles. 
  
     \begin{figure}
   \centering
\begin{overpic}[width =0.35\columnwidth]{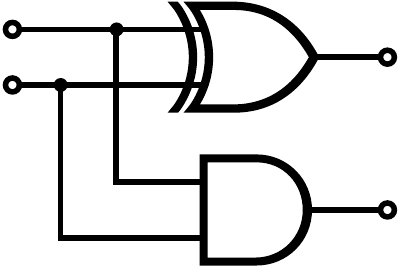}
\put(-7,59){$A$}
\put(-7,44.5){$B$}
\put(100,51.7){$S$}
\put(100,13.5){$D$}
\end{overpic}
\caption{
\label{fig:HalfAdder}
The half adder shown above requires two copies of  $A$ and  $B$.
}
\vspace{-1em}
\end{figure}

\subsection{Only 1$\times$1 Particles Are Insufficient}\label{sec:dualnec}

First we provide terminology to define how particles interact with each other. We say
that particle $q$ \emph{blocks} particle $p$ during a 
move $m_k$, if $p$ is prevented from reaching location $\mathbf{x}=(x,y)$
because particle $q$ occupies this location. As a consequence, at the end
of $m_k$, $q$'s location is $\mathbf{x}$, while particle $p$'s location is adjacent to
$\mathbf{x}$, depending on the direction of $m_k$.
Furthermore, the sequence of locations $\langle \mathbf{s},\ldots,\mathbf{g} \rangle$
of a particle $p$ from a start location $\mathbf{s}$ to its goal location $\mathbf{g}$
describes its \emph{path}. For an unchanged sequence of force-field moves 
and obstacles, a particle $p$ can only be prevented from reaching its destination by
adding additional particles, or removing existing ones. We argue in the following
that this will still lead to $g$ being occupied at the end of the sequence, possibly
by a different particle.

%


\begin{lemma}\label{thm:AdditionalParticlesCannotPreventAnOccupation}
If given a fixed workspace $W$\! and a command sequence $\mathbf{m}$ that moves a particle $p$ from start location $\mathbf{s}$ to goal location $\mathbf{g}$, then adding additional particles anywhere in $W$\! at any stage of the command sequence cannot prevent $\mathbf{g}$ from being occupied at the conclusion of sequence $\mathbf{m}$.
\end{lemma}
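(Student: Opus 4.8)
The plan is to forget particle identities and reason purely about the set of occupied free cells, which is legitimate because the statement only asserts that $\mathbf{g}$ ends up occupied ``possibly by a different particle.'' Let $S_0,S_1,\dots,S_K$ be the occupied-cell sets after each move in the original, unmodified run, so that $\mathbf{g}\in S_K$ by hypothesis, and let $S_0',\dots,S_K'$ be the corresponding sets in the modified run in which extra particles may be injected at any stage. The whole statement reduces to the monotonicity invariant $S_k\subseteq S_k'$ for every $k$, since then $\mathbf{g}\in S_K\subseteq S_K'$ and $\mathbf{g}$ is occupied at the end of the modified run.

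The heart of the argument is a single-move monotonicity lemma: writing $f_m$ for the maximal-move update in direction $m\in\{u,d,\ell,r\}$, if $S\subseteq S'$ then $f_m(S)\subseteq f_m(S')$. I would prove this for $m=r$ and get the other three directions by symmetry. The key structural observation is that a rightward move acts independently on each maximal horizontal run of free cells (a \emph{segment}) cut out by the fixed obstacles and the boundary: a segment $\sigma$ containing $j$ particles ends, after the move, with those particles resting on exactly the $j$ rightmost cells of $\sigma$, because particles cannot cross obstacles and simply pile against the far wall of their segment. Since $S\subseteq S'$ forces $|S\cap\sigma|\le|S'\cap\sigma|$ for every segment $\sigma$, the rightmost $|S\cap\sigma|$ cells are contained in the rightmost $|S'\cap\sigma|$ cells; taking the union over all segments of the row decomposition yields $f_r(S)\subseteq f_r(S')$.

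With the single-move lemma in hand, I would establish the invariant by induction on $k$. The base case $S_0\subseteq S_0'$ holds because any particles added at the start only enlarge the configuration. For the inductive step, assume $S_{k-1}\subseteq S_{k-1}'$; applying $f_{m_k}$ preserves the inclusion by the single-move lemma, and injecting an arbitrary set $A_k$ of new particles after the $k$th move replaces $f_{m_k}(S_{k-1}')$ by $f_{m_k}(S_{k-1}')\cup A_k$, which still contains $f_{m_k}(S_{k-1})=S_k$. Hence $S_k\subseteq S_k'$, and this handles additions at every stage uniformly, since adding particles only ever passes to a superset.

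The main obstacle I anticipate is making the single-move lemma fully rigorous rather than merely intuitive. I must justify that the occupied-cell set is a faithful description of the state—distinct particles never collapse into one cell, because the particle ahead blocks the one behind, so occupancy counts equal particle counts segment by segment—and I must verify the ``pile against the far end'' description in the degenerate cases, namely length-one segments and segments flush against the workspace boundary. Once monotonicity of $f_m$ is secured, the remainder is routine bookkeeping.
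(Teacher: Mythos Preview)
Your proof is correct and takes a genuinely different route from the paper. The paper argues by \emph{particle chasing}: it follows the distinguished particle $p$ along its original path, and whenever an added particle $q$ blocks $p$, it observes that $q$ now sits exactly where $p$ would have been and therefore inherits $p$'s trajectory from that point on; iterating this hand-off shows that \emph{some} particle reaches $\mathbf{g}$. Your argument instead abstracts away identities entirely and proves the stronger invariant $S_k\subseteq S_k'$ via a one-move monotonicity lemma based on the segment decomposition.

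Your approach buys several things. First, it handles injections at arbitrary stages and of arbitrarily many particles uniformly, whereas the paper's text treats a single added particle and waves at the general case with ``by induction.'' Second, the segment argument makes the single-move step fully rigorous in one shot, while the paper's hand-off reasoning leaves implicit why the blocking particle $q$ must itself have been stopped at exactly the cell $p$ was headed for (and why subsequent interactions with the rest of the swarm cannot derail it). What the paper's approach buys is narrative intuition: it tells you \emph{which} particle ends up at $\mathbf{g}$, namely the last one in the chain of blockers, which is occasionally useful elsewhere (e.g., in the companion Lemma about deletions). Your set-monotonicity lemma is the cleaner tool and would also streamline that companion result.
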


\begin{proof} Consider the effect of adding a particle to workspace $W$\!. If
$p$ never gets blocked by any particle $q$, then 
$p$'s path remains the same.
Therefore, at the conclusion of $\mathbf{m}$, $p$ occupies $\mathbf{g}$. 

Now suppose $p$ gets blocked by $q$. By the definition of blocking, $q$
prevents $p$ from reaching some location $\mathbf{x}$ because $q$ already
occupies this location. After the blocking, the command sequence will continue and
so particle $q$ will continue on $p$'s original path, following the same
instructions and therefore ending up in the same location, $\mathbf{g}$, unless
$q$ gets blocked by yet another particle. By induction, additional particles will have the
same effect. If $q$ gets blocked by any other particle, then this particle will continue
on $p$'s original path. Thus by adding more particles, it is impossible to
prevent some particle from occupying $\mathbf{g}$ at the conclusion of
$\mathbf{m}$.  
\end{proof}

\begin{corollary}
A  {\sc not} gate without dual-rail inputs cannot be constructed.
\end{corollary}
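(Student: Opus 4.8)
The plan is to derive this corollary directly from Lemma~\ref{thm:AdditionalParticlesCannotPreventAnOccupation}, exploiting the conservative, monotone nature of the system that the lemma establishes. First I would fix the single-rail encoding convention: on any wire, a logical {\sc True} is represented by the presence of a particle at the designated cell, and {\sc False} by its absence. Under this convention, a {\sc not} gate must realize the following behavior: when the input wire carries no particle (input {\sc False}), a particle must arrive at the output cell $\mathbf{g}$ (output {\sc True}); and when the input wire carries a particle (input {\sc True}), no particle may arrive at $\mathbf{g}$ (output {\sc False}).

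Suppose for contradiction that such a gate exists, realized by some fixed workspace $W$ and command sequence $\mathbf{m}$. I would first consider the {\sc False}-input scenario. Since the output must be {\sc True}, some particle $p$---necessarily drawn from the fixed contents of the gadget, as the input wire is empty---travels from its start location to $\mathbf{g}$ under $\mathbf{m}$. The key observation is that the {\sc True}-input scenario is obtained from the {\sc False}-input scenario by exactly one change: adding the input particle on the input wire.

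Applying Lemma~\ref{thm:AdditionalParticlesCannotPreventAnOccupation} to $W$, $\mathbf{m}$, and $p$, the addition of the input particle (anywhere, at any stage of $\mathbf{m}$) cannot prevent $\mathbf{g}$ from being occupied at the conclusion of $\mathbf{m}$. Hence in the {\sc True}-input scenario the output cell $\mathbf{g}$ is still occupied, so the gate still outputs {\sc True}, contradicting the required {\sc False} output. The argument is symmetric under the opposite encoding convention ({\sc True} = absence), since a {\sc not} gate must in either case produce an output particle precisely for the input value that supplies no particle, and the lemma then forbids removing that output particle once the input particle is present. I do not expect a serious obstacle here; the only step requiring care is the modeling move of identifying ``switching the input from the particle-free value to the particle-bearing value'' with ``adding a particle,'' so that Lemma~\ref{thm:AdditionalParticlesCannotPreventAnOccupation} applies verbatim. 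This is exactly why dual-rail logic is needed: it guarantees a particle on the complementary rail $\bar{A}$ precisely when $A$ is absent, thereby circumventing the monotonicity obstruction that rules out a single-rail inverter.
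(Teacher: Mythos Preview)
Your proposal is correct and follows essentially the same approach as the paper: both argue by contradiction, start from the {\sc False}-input scenario in which $\mathbf{g}$ is occupied, and then invoke Lemma~\ref{thm:AdditionalParticlesCannotPreventAnOccupation} to conclude that adding the input particle cannot vacate~$\mathbf{g}$. Your added remarks about the opposite encoding convention and the role of dual-rail logic are sound but go slightly beyond what the paper states.
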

\begin{proof}
By contradiction.
A particle logic {\sc not} gate without dual-rail inputs has one input at $\mathbf{s}$, one output at $\mathbf{g}$, an arbitrary (possibly zero) number of asserted inputs, which are all initially occupied, and an arbitrary (possibly zero) number of waste outputs.

To satisfy the {\sc not} gate conditions given a command sequence $\mathbf{m}$, the following conditions must be satisfied.
\begin{enumerate}
\item If $\mathbf{s}$ is initially unoccupied, $\mathbf{g}$ must be occupied at the conclusion of $\mathbf{m}$.
\item If $\mathbf{s}$ is initially occupied, $\mathbf{g}$ must be unoccupied at the conclusion of $\mathbf{m}$.
\end{enumerate}
By Lemma~\ref{thm:AdditionalParticlesCannotPreventAnOccupation}, if $\mathbf{s}$ initially unoccupied results in $\mathbf{g}$ being occupied by some particle $p$ at the conclusion of $\mathbf{m}$, then the addition of a particle $q$ at $\mathbf{s}$ cannot prevent $\mathbf{g}$ from being filled, resulting in a contradiction.
\end{proof}

This shows that dual-rail logic is necessary for the formation of {\sc not} gates. 

Additionally, we show that $1\times1$ particles are insufficient to produce {\sc fan-out} gates. To this end, we must examine the possibilities both when we add additional particles to the scenario and when we remove them.  

\begin{lemma}\label{thm:TwoParticlesTwoGoalsImpliesOneParticleOneGoal} 
Consider a given workspace $W$\! with a number of particles, two of which are 
$p_1$ and $p_2$, initially at $\mathbf{s}_1$ and $\mathbf{s}_2$.
Let $\mathbf{m}$ be a command sequence that moves $p_1$ and $p_2$ to the
respective goal locations $\mathbf{g}_1$ and $\mathbf{g}_2$. Then deleting
either $p_1$ or $p_2$ from the original configuration 
results in at least one of $\mathbf{g}_1$ or $\mathbf{g}_2$ being
occupied at the conclusion of $\mathbf{m}$.  
\end{lemma}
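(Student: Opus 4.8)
The plan is to compare two runs of the same command sequence $\mathbf{m}$ on the same workspace $W$: the \emph{original} run, which contains both $p_1$ and $p_2$, and the \emph{reduced} run, in which one of them (say $p_2$; deleting $p_1$ is symmetric) has been removed. Writing $C_t$ and $C'_t$ for the sets of cells occupied after the $t$-th move in the original and reduced runs respectively, I would track the pair $(C_t,C'_t)$ through the whole sequence and show that $C'_t$ is always $C_t$ with exactly one cell removed. Once this is established the conclusion is immediate: at the final step both $\mathbf{g}_1$ and $\mathbf{g}_2$ lie in $C_{\mathrm{final}}$, and since $C'_{\mathrm{final}}$ omits only a single cell of $C_{\mathrm{final}}$, at least one of $\mathbf{g}_1,\mathbf{g}_2$ must still belong to $C'_{\mathrm{final}}$, i.e.\ still be occupied.

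The core technical step, and the part I expect to be the main obstacle, is a monotonicity property of a single force-field move: if $X\subseteq Y$ are two sets of occupied cells, then after applying the same move the images satisfy $f(X)\subseteq f(Y)$. I would prove this directly from the stopping rule. Consider a ``Go Right'' move, the other three directions being symmetric. The fixed obstacles partition each row into maximal free horizontal segments, and no particle can cross a segment boundary during a horizontal move; within a segment containing $j$ particles the move packs them into the $j$ rightmost free cells of that segment. If $X\subseteq Y$, each segment contains at most as many particles of $X$ as of $Y$, say $j_X\le j_Y$, so the rightmost $j_X$ cells form a subset of the rightmost $j_Y$ cells, giving $f(X)\subseteq f(Y)$ segment by segment. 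Since a composition of monotone maps is monotone, applying this move by move yields $C'_t\subseteq C_t$ for all $t$, starting from $C'_0=C_0\setminus\{\mathbf{s}_2\}\subseteq C_0$.

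It then remains to control the \emph{size} of the gap $C_t\setminus C'_t$. Here I would invoke conservation: the model neither creates nor destroys particles and, as $W$ is bounded by obstacles, never lets one leave, while two particles never share a cell; hence $|C_t|$ equals the (constant) particle count of the original run and $|C'_t|$ that of the reduced run for every $t$. These counts differ by exactly one, so $|C_t\setminus C'_t|=1$ throughout, and together with the inclusion $C'_t\subseteq C_t$ this says precisely that $C'_t$ is obtained from $C_t$ by deleting a single cell. Applying this at the final step completes the argument as described above, and the identical reasoning with the roles of $p_1$ and $p_2$ exchanged handles deletion of $p_1$. I note that this monotonicity viewpoint also re-derives Lemma~\ref{thm:AdditionalParticlesCannotPreventAnOccupation} in the opposite inclusion direction, so the two lemmas ultimately rest on the same mechanism.
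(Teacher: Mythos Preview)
Your proof is correct and takes a genuinely different route from the paper's. The paper argues operationally: remove (say) $p_1$ and follow $p_2$; either $p_2$ never interacted with $p_1$ and so still reaches $\mathbf{g}_2$, or at the first move where $p_1$ would have blocked $p_2$ the particle $p_2$ instead slides into the cell $p_1$ would have occupied (being stopped by whatever stopped $p_1$) and thereafter ``replaces'' $p_1$, inheriting its remaining path to $\mathbf{g}_1$. This path-takeover narrative is intuitive but somewhat informal about interactions with the other particles in $W$. Your argument abstracts away individual trajectories entirely: the monotonicity of a single maximal move (proved cleanly via the per-segment packing description) gives $C'_t\subseteq C_t$ for all $t$, conservation pins down $|C_t\setminus C'_t|=1$, and the conclusion drops out by counting. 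What your approach buys is rigor and generality with no case analysis and no need to reason about who blocks whom; what the paper's approach buys is an explicit identification of \emph{which} goal remains occupied (namely $\mathbf{g}_2$ if $p_2$ was never blocked by $p_1$, else $\mathbf{g}_1$), information your set-level argument does not directly supply. Your closing remark that the same monotonicity, read in the opposite inclusion direction, yields Lemma~\ref{thm:AdditionalParticlesCannotPreventAnOccupation} is also correct and nicely unifies the two lemmas.
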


\begin{proof} 
Without loss of generality, suppose we remove particle $p_1$.  

First suppose $p_2$ never gets blocked by $p_1$. Then the removal of $p_1$ will
not affect the path of $p_2$. Particle $p_2$ has the same number of blockings that
it had before the removal of $p_1$ and so 
$p_2$ will follow the same path and
occupy $\mathbf{g}_2$ at the conclusion. 

Alternatively, suppose $p_2$ gets blocked by $p_1$ when $p_1$ is occupying
location $\mathbf{x}$. Because $p_1$ is removed, it no longer occupies
$\mathbf{x}$ during this move; because it was stopped in the common direction
when blocking $p_2$, particle $p_2$ gets stopped by this obstacle at
location $\mathbf{x}$, previously occupied by $p_1$. Particle $p_2$ now
proceeds along the path previously traveled by $p_1$. Effectively, $p_2$ has
replaced $p_1$ and follows the path until it reaches $\mathbf{g}_1$. Successive
blockngs between $p_2$ and $p_1$ in the original scenario are resolved in the same
manner.  
\end{proof}

  \begin{table}
\begin{displaymath}
\begin{array}{ccc|cccc}
\multicolumn{3}{c}{\emph{Inputs}} & \multicolumn{4}{c}{\emph{Outputs}} \\
   A & \overline{A} & 1 & A & A &  \overline{A} & \overline{A}\\
\hline
0 & 1 & 1 & 0 & 0 & 1 & 1  \\
1 & 0 & 1 & 1 & 1 & 0 & 0  \\
\end{array}
\end{displaymath}
\caption{{\sc fan-out} operation. This cannot be implemented with 1$\times$1 particles and obstacles. Our technique uses 2$\times$1 particles. }
  \label{tab:Fanout}
\end{table}

In the context of programmable matter, it is natural to consider systems in which particles
are moved around, but neither created nor destroyed; such a system is called \emph{conservative}.
As it turns out, this has important consequences.

\begin{theorem}\label{cor:No1x1FanOut}
A  conservative dual-logic {\sc fan-out} gate cannot be constructed using only 1$\times$1 particles.
\end{theorem}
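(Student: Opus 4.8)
The plan is to argue by contradiction, combining the monotonicity furnished by Lemma~\ref{thm:AdditionalParticlesCannotPreventAnOccupation} with a counting consequence of conservativity. Following the dual-rail encoding of Table~\ref{tab:Fanout}, a fan-out gate would read a single signal on the input rails $\mathbf{s}_A,\mathbf{s}_{\overline A}$ (together with the constant rail $\mathbf{s}_1$ and any fixed asserted inputs) and, under one command sequence $\mathbf{m}$, deliver two copies of that signal on the output goals $\mathbf{g}_1,\mathbf{g}_2$ (the two $A$-copies) and $\overline{\mathbf{g}}_1,\overline{\mathbf{g}}_2$ (the two $\overline A$-copies), possibly also using waste receptacles. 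Write $C_1$ for the configuration carrying a particle at $\mathbf{s}_A$ but none at $\mathbf{s}_{\overline A}$, and $C_0$ for the configuration carrying a particle at $\mathbf{s}_{\overline A}$ but none at $\mathbf{s}_A$, with all other inputs identical. Correctness demands that after $\mathbf{m}$ the occupied output goals are exactly $\{\mathbf{g}_1,\mathbf{g}_2\}$ for $C_1$ and exactly $\{\overline{\mathbf{g}}_1,\overline{\mathbf{g}}_2\}$ for $C_0$.

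The first step introduces the superposed configuration $C_{11}$ that places particles at \emph{both} $\mathbf{s}_A$ and $\mathbf{s}_{\overline A}$, leaving every other input as before. Since $C_{11}$ arises from $C_1$ by adding the $\mathbf{s}_{\overline A}$ particle and from $C_0$ by adding the $\mathbf{s}_A$ particle, Lemma~\ref{thm:AdditionalParticlesCannotPreventAnOccupation}, read as monotonicity (adding particles never vacates an already-occupied cell), forces every goal occupied under $C_1$ and every goal occupied under $C_0$ to remain occupied under $C_{11}$. Hence all four output goals $\mathbf{g}_1,\mathbf{g}_2,\overline{\mathbf{g}}_1,\overline{\mathbf{g}}_2$ are simultaneously occupied at the end of $\mathbf{m}$ in $C_{11}$.

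The second step recovers $C_0$ from $C_{11}$ by deleting the $\mathbf{s}_A$ particle. Here I would use conservativity: because particles are neither created nor destroyed and no two share a cell, the number of occupied cells equals the number of particles in \emph{every} configuration. Monotonicity already shows the occupied set can only shrink under deletion, so removing a single particle drops the occupied-cell count by exactly one, i.e.\ it vacates \emph{exactly one} cell. (This is the natural strengthening of Lemma~\ref{thm:TwoParticlesTwoGoalsImpliesOneParticleOneGoal} from two goals to many, and it follows from the same cascade of take-overs used in its proof.) Consequently at least three of the four output goals remain occupied in $C_0$, whereas the fan-out specification permits only the two $\overline A$-copies there. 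This contradiction completes the proof.

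The step I expect to be the main obstacle is exactly this transition back to $C_0$. The tempting shortcut---apply Lemma~\ref{thm:TwoParticlesTwoGoalsImpliesOneParticleOneGoal} directly to the two particles reaching $\mathbf{g}_1,\mathbf{g}_2$ in $C_1$ and then delete the input particle---breaks down once the gate is allowed waste outputs, since the particle sitting on $\mathbf{s}_A$ need not be one of the two reaching $\mathbf{g}_1,\mathbf{g}_2$: it could drain into a waste receptacle while asserted inputs supply the true copies. Routing the argument through the superposed configuration $C_{11}$ and invoking conservativity to pin the occupied-cell count sidesteps this, because it bounds the damage of a single deletion (one vacated cell) without tracking which particle lands where. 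I would therefore take care to state and justify the ``deletes exactly one cell'' claim from Lemma~\ref{thm:AdditionalParticlesCannotPreventAnOccupation} together with conservativity, rather than leaning on the two-goal form of Lemma~\ref{thm:TwoParticlesTwoGoalsImpliesOneParticleOneGoal}.
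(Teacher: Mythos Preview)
Your argument is correct and close in spirit to the paper's, but the two differ in the order of the add/delete steps and in how the deletion is justified. The paper starts from $C_1$, first \emph{removes} the particle at $\mathbf{s}_A$, invokes Lemma~\ref{thm:TwoParticlesTwoGoalsImpliesOneParticleOneGoal} to conclude that one of $\mathbf{g}_{a_1},\mathbf{g}_{a_2}$ survives, and only then \emph{adds} a particle at $\mathbf{s}_{\overline A}$, using Lemma~\ref{thm:AdditionalParticlesCannotPreventAnOccupation} to reach $C_0$ with a forbidden $A$-goal still filled. You reverse the order---add first to reach the superposed $C_{11}$ with all four goals filled, then delete back to $C_0$---and replace the appeal to Lemma~\ref{thm:TwoParticlesTwoGoalsImpliesOneParticleOneGoal} by the counting observation that monotonicity (the contrapositive of Lemma~\ref{thm:AdditionalParticlesCannotPreventAnOccupation}) together with conservativity forces a single deletion to vacate exactly one cell. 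Your route buys exactly the robustness you flag: the paper's direct use of Lemma~\ref{thm:TwoParticlesTwoGoalsImpliesOneParticleOneGoal} tacitly treats the removed input particle as one of the two particles landing on $\mathbf{g}_{a_1},\mathbf{g}_{a_2}$, which need not hold once multiple asserted supplies and waste receptacles are allowed; your counting argument sidesteps this identification entirely and even gives a quantitatively sharper contradiction (at least three output goals filled against the two permitted in $C_0$). Either path closes the proof.
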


\begin{proof} 
We assume such a {\sc fan-out} gate exists and reach a contradiction. 
Consider a  {\sc fan-out} gate $W$,  dual-rail input locations  $\mathbf{s}_{a}$, $\mathbf{s}_{\overline{a}}$, and dual-rail output locations  $\mathbf{g}_{a_1}, \mathbf{g}_{a_2},\mathbf{g}_{\overline{a}_1},\mathbf{g}_{\overline{a}_2}$. Because particle logic is conservative, there must be at least one additional input location $\mathbf{s}_p$ and particle $p$. A {\sc fan-out} gate implements the truth table shown in Table \ref{tab:Fanout}. Given an arbitrary command sequence $\mathbf{m}$:  
\begin{enumerate}
\item If $\mathbf{s}_{a}$  and $\mathbf{s}_p$ are initially occupied and $\mathbf{s}_{\overline{a}}$ vacant at the conclusion of $\mathbf{m}$, then $\mathbf{g}_{a_1}$ and $\mathbf{g}_{a_2}$ are occupied and the locations  $\mathbf{g}_{\overline{a}_1}$ and $\mathbf{g}_{\overline{a}_2}$ are vacant.
\item If $\mathbf{s}_{a}$ is initially vacant and $\mathbf{s}_{\overline{a}}$ and $\mathbf{s}_p$ are occupied at the conclusion of $\mathbf{m}$, then $\mathbf{g}_{a_1}$ and $\mathbf{g}_{a_2}$ are vacant and the locations  $\mathbf{g}_{\overline{a}_1}$ and $\mathbf{g}_{\overline{a}_2}$ are occupied.
\end{enumerate}

We will now assume that condition 1, above, is the original scenario and  add
and subtract particles, applying
Lemmas~\ref{thm:AdditionalParticlesCannotPreventAnOccupation}
and~\ref{thm:TwoParticlesTwoGoalsImpliesOneParticleOneGoal}, to show that it is
impossible to meet condition 2. 

Assume condition 1. Particles $a$ and $p$ start at $\mathbf{s}_{a}$ and
$\mathbf{s}_p$  respectively and at the conclusion of $\mathbf{m}$, the
locations  $\mathbf{g}_{a1}$ and $\mathbf{g}_{a2}$ are occupied. Now remove
particle $a$. According to
Lemma~\ref{thm:TwoParticlesTwoGoalsImpliesOneParticleOneGoal}, either
$\mathbf{g}_{a_1}$ or $\mathbf{g}_{a_2}$ must be occupied at the conclusion of
$\mathbf{m}$. Suppose without loss of generality that $\mathbf{g}_{a_1}$ is
filled.  By Lemma~\ref{thm:AdditionalParticlesCannotPreventAnOccupation},
adding an additional particle at location $\mathbf{s}_{\overline{a}}$ cannot
prevent $\mathbf{g}_{a_1}$ from being filled. However, to meet condition 2,
$\mathbf{g}_{a_1}$ must be vacant, thus no such gate is possible.  
\end{proof}

\section{Device and Gate Design}\label{sec:Design}
Now we consider actually designing clock sequence, logic gates, and wiring, making use of 2$\times$1 particles.

\paragraph{Choosing a clock sequence}

The \emph{clock sequence} is the ordered set of moves that are simultaneously applied to every particle in our workspace. We call this the clock sequence because, as in digital computers, this sequence is universally applied and keeps all logic synchronized.

A clock sequence determines the basic functionality of each gate.  To simplify implementation in the spirit of Reduced Instruction Set Computing (RISC), which uses a simplified set of instructions that run at the same rate, we want to use the same clock cycle for each gate and for \emph{all} wiring. 
Our early work in \cite{Becker2014} used a standard sequence  $\langle d,\ell,d,r \rangle$.  This sequence can be used to make {\sc and, or,} and {\sc xor} gates, and any of their inverses.  This sequence can also be used for \emph{wiring} to connect arbitrary inputs and outputs, as long as the outputs are below the inputs.  Unfortunately, $\langle d,\ell,d,r \rangle$ cannot move any particles upwards. To connect outputs as inputs to higher-level logic requires an additional reset sequence that contains a $\langle u \rangle$ command.  Therefore, including all four directions is a necessary condition for a valid clock sequence for computation that reuses gates.  The shortest sequence has four commands, each appearing once. We choose the sequence $\langle d,\ell,u,r \rangle$ and by designing examples, prove that this sequence is sufficient for logic gates, {\sc fan-out} gates, and wiring.

This clock sequence has the attractive property of being a clockwise (CW) rotation through the possible input sequences.  One could imagine our particle logic circuit mounted on a wheel rotating about an axis parallel to the ground. If the particles were moved by the pull of gravity, each counter-clockwise revolution would advance the circuit by one clock cycle. A gravity-fed hardware implementation of particle computation is shown in Fig.~\ref{fig:prototype}.

   \begin{figure}
   \centering
   \href{http://youtu.be/EJSv8ny31r8}{
\begin{overpic}[width =\columnwidth]{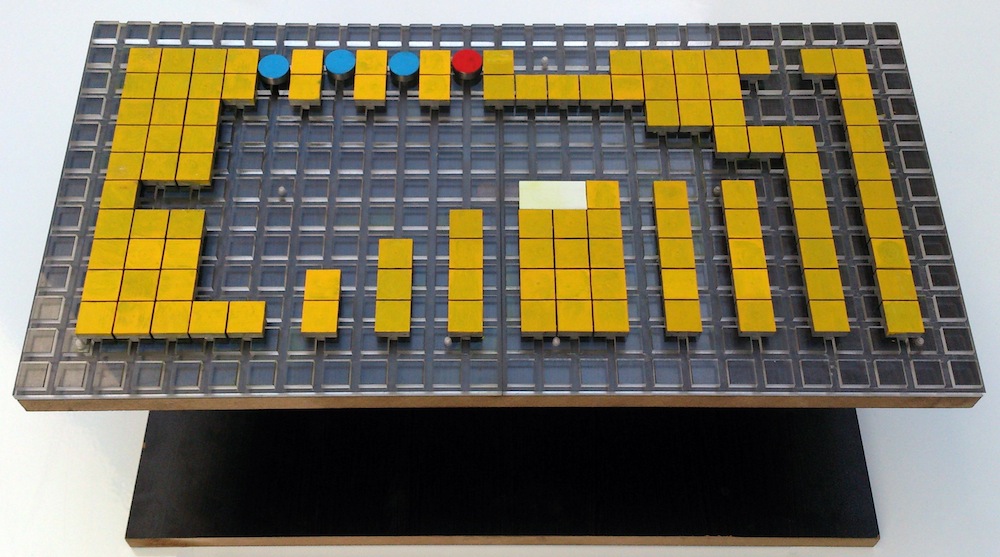}
\put(26,48.5){ \textcolor{white}{$1$}}
\put(32.7,49){ \textcolor{white}{$1$}}
\put(39.2,48.8){ \textcolor{white}{$1$}}
\put(45,49){ \textcolor{white}{$A$}}
\put(77,48.5){ \textcolor{white}{$\overline{A}$}}
\put(21,41){ \textcolor{white}{$\leftarrow$ obstacles}}
\put(52,35.5){ \textcolor{black}{$2 \times 1$}}
\put(27,23){ \textcolor{white}{$A$}}
\put(34,23){ \textcolor{white}{$A$}}
\put(41.5,23){ \textcolor{white}{$A$}}
\put(48.8,23){ \textcolor{white}{$A$}}
\put(63,23){ \textcolor{white}{$\overline{A}$}}
\put(70,23){ \textcolor{white}{$\overline{A}$}}
\put(77.5,23){ \textcolor{white}{$\overline{A}$}}
\put(85,23){ \textcolor{white}{$\overline{A}$}}
\end{overpic}}
\caption{
\label{fig:prototype}
Gravity-fed hardware implementation of  particle computation.  The reconfigurable prototype  is arranged using obstacle blocks (yellow) as a {\sc fan-out} gate using a 2$\times$1 particle (white), three supply particles (blue), and one red dual-rail input (red). This paper proves that such a gate is impossible using only 1$\times$1 particles. \href{http://youtu.be/H6o9DTIfkn0}{See the demonstrations in the video \cite{bmd+-pcdfbm-15}, \url{https://youtu.be/H6o9DTIfkn0}.} } 
\vspace{-1em}
\end{figure}

\emph{Limitations:} The clock sequence imposes constraints on the set of reachable positions after one cycle, as illustrated in Fig.~\ref{fig:Unreachable1cycle}.  
If at the completion of a $\langle d,\ell,u,r \rangle$ cycle a particle is located at  $(s_x,s_y)$, the potential locations at the end of the next cycle are any locations except $([s_x+1,\infty],s_y)$, and  $(s_x-1,[-\infty,s_y-1])$. 
For the particle to start at $(s_x,s_y)$ after a $r$ move, it must have been stopped by an obstacle at $(s_x+1,s_y)$, so $(s_x+1,s_y)$ is unreachable.  
Moving to the right of this obstacle requires a move of length $\lambda \ne 0$ in the down direction, followed by a $r$ move, followed by a move of $\lambda$ in the up direction.
However, $r$ is not the second move in the sequence. Because $r$ is the final move in the clock sequence, locations directly to the right of the start location are unreachable in one cycle.
Similarly, to end at any location $(s_x-1,g_y)$ with $g_y\le s_y$ requires both an obstacle at $(s_x,g_y)$ and an initial down move to at least $(s_x,g_y-1)$, but these requirements are contradictory.

 \begin{figure}
\centering
\begin{overpic}[width =.6\columnwidth]{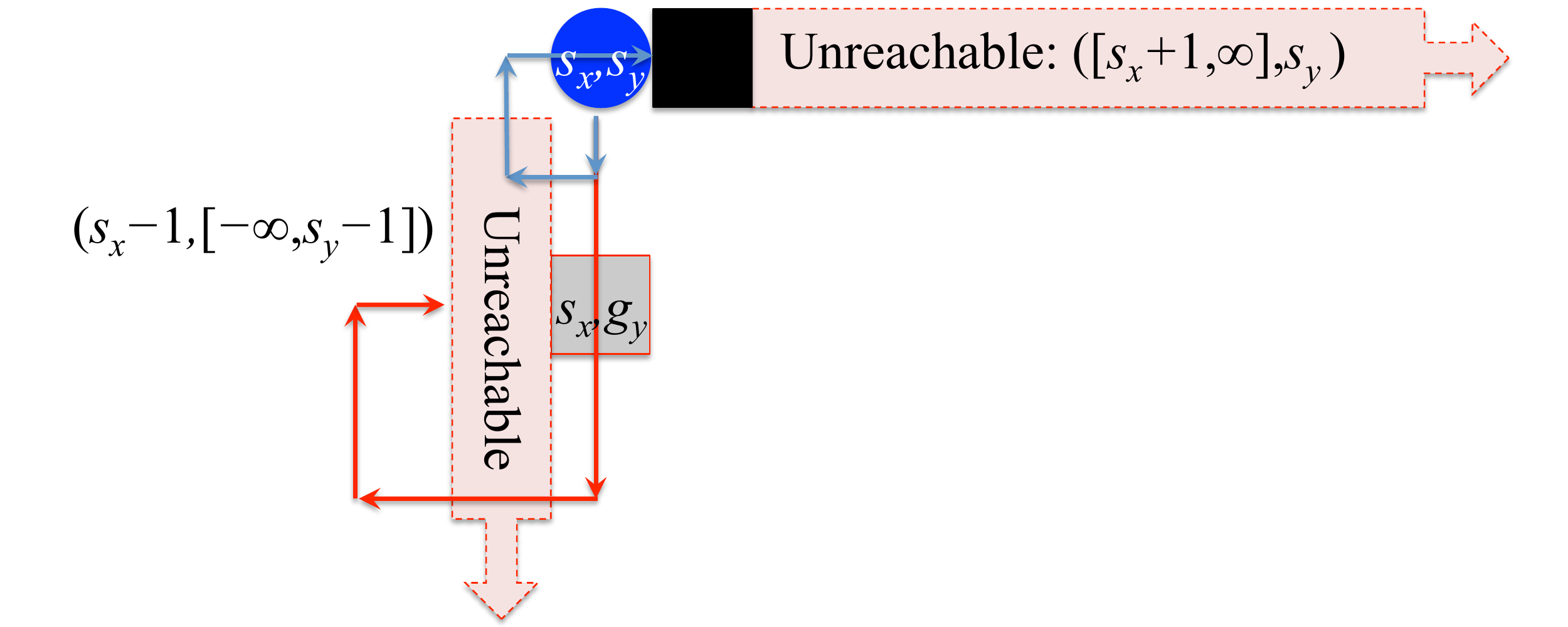}\end{overpic}
\caption{
\label{fig:Unreachable1cycle}
Two regions are unreachable in one $\langle d,\ell,u,r \rangle$  cycle.
}
\end{figure}

\paragraph{A {\sc fan-out} Gate}\label{sec:FanOut}
A {\sc fan-out} gate with two outputs implements the truth table in Table~\ref{tab:Fanout}.  
This cannot be implemented with 1$\times$1 particles and obstacles by Corollary \ref{cor:No1x1FanOut}.   Our technique uses 2$\times$1 particles.   A single-input, two-output {\sc fan-out} gate is shown in Fig.~\ref{fig:Fanout}.  This gate requires a dual-rail input, a supply particle, and a $2\times 1$ slider.  The  \emph{clockwise} control sequence $\langle d,\ell,u,r \rangle$  duplicates the dual-rail input.

The {\sc fan-out} gate can drive multiple outputs. In Fig.~\ref{fig:Fanout4} a single input drives four outputs.  This gate requires a dual-rail input, three supply particles, and a $2\times 1$ slider.  The \emph{clockwise} control sequence $\langle d,\ell,u,r \rangle$ quadruples the dual-rail input.
In general, an $n$-output {\sc fan-out} gate with control sequence $\langle d,\ell,u,r \rangle$ requires a dual-rail input, $n-1$ supply particles, and one $2\times 1$ slider. It requires an area of size $4n+7\times 2n+4$.
 \begin{figure}
\centering
 \vspace{1em}
\begin{overpic}[width =.5\columnwidth]{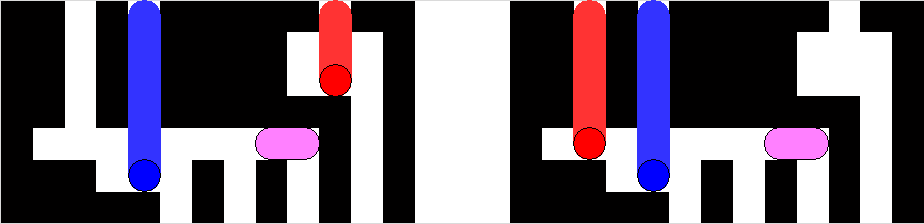}
\put(6.75,25){$A$} \put(15,25){$1$} \put(35,25){$\overline{A}$}
\put(62.5,25){$A$} \put(70,25){$1$} \put(90,25){$\overline{A}$}

\put(17.5,0){$A$}\put(24,0){$A$} \put(31.2,0){$\overline{A}$} \put(38,0){$\overline{A}$} 
\put(72.5,0){$A$} \put(79.5,0){$A$} \put(86.2,0){$\overline{A}$} \put(93,0){$\overline{A}$} 
\put(0,-5){ $\langle d \rangle$}\put(20,-5){$A=0$ }\put(70,-5){ $A=1$ }\end{overpic}
\vspace{1.5em}\\

\begin{overpic}[width =.5\columnwidth]{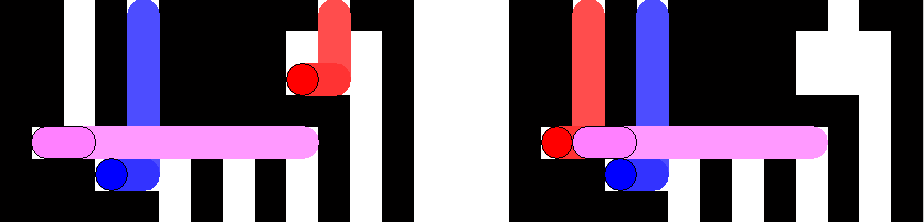}\put(0,-5){ $\langle d,\ell \rangle$}\put(20,-5){$A=0$ }\put(70,-5){ $A=1$ }\end{overpic}
\vspace{1.5em}\\

\begin{overpic}[width =.5\columnwidth]{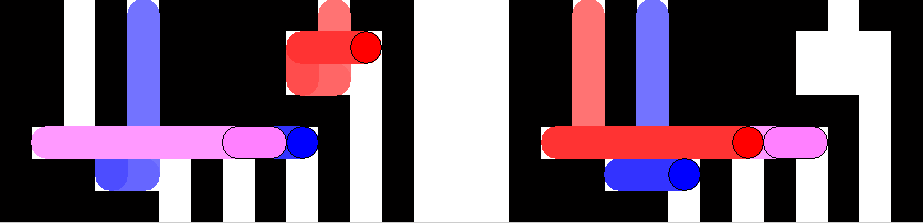}\put(0,-5){ $\langle d,\ell,u,r \rangle$}\put(20,-5){$A=0$ }\put(70,-5){ $A=1$ }
\put(17.5,0){$A$}\put(24,0){$A$} \put(31.2,0){$\overline{A}$} \put(38,0){$\overline{A}$} 
\put(72.5,0){$A$} \put(79.5,0){$A$} \put(86.2,0){$\overline{A}$} \put(93,0){$\overline{A}$} 
\end{overpic}
\vspace{1em}
\caption{
\label{fig:Fanout}
A single input, two-output {\sc fan-out} gate.  This gate requires a dual-rail input, a supply particle, and a $2\times 1$ slider.  The  \emph{clockwise} control sequence $\langle d,\ell,u,r \rangle$  duplicates the dual-rail input.
}
\end{figure}

 \begin{figure} 
\centering
\begin{overpic}[width =.5\columnwidth]{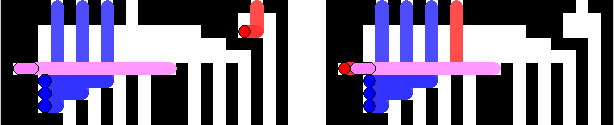}
\put(-2,-7){ $\langle d, \ell \rangle$}
\put(20,-7){$A=0$ }\put(70,-7){ $A=1$ }
\scriptsize
\put(9,21){$1$~~$1$~~$1$~~$A$} \put(41,21){$\overline{A}$} 
\put(61.2,21){$1$~~$1$~~$1$~~$A$} \put(93.5,21){$\overline{A}$} 

\put(10,-3){$A$~\,$A$~\,$A$~$A$} \put(30.5,-3){$\overline{A}$~\,$\overline{A}$~$\overline{A}$~\,$\overline{A}$} 
\put(63,-3){$A$~\,$A$~\,$A$~$A$} \put(83.6,-3){$\overline{A}$~\,$\overline{A}$~$\overline{A}$~\,$\overline{A}$} 
\end{overpic}
\vspace{3em}\\

\begin{overpic}[width =.5\columnwidth]{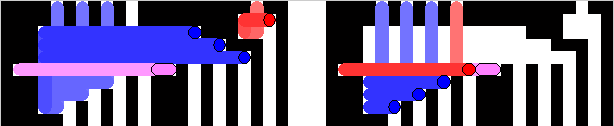}
\put(-2,-7){ $\langle d,\ell,u,r \rangle$}
\put(20,-7){$A=0$ }\put(70,-7){ $A=1$ }
\scriptsize
\put(9,21){$1$~~$1$~~$1$~~$A$} \put(41,21){$\overline{A}$} 
\put(61.2,21){$1$~~$1$~~$1$~~$A$} \put(93.5,21){$\overline{A}$} 

\put(10,-3){$A$~\,$A$~\,$A$~$A$} \put(30.5,-3){$\overline{A}$~\,$\overline{A}$~$\overline{A}$~\,$\overline{A}$} 
\put(63,-3){$A$~\,$A$~\,$A$~$A$} \put(83.6,-3){$\overline{A}$~\,$\overline{A}$~$\overline{A}$~\,$\overline{A}$} 
\end{overpic}
\vspace{1em}
\caption{\label{fig:Fanout4}
The {\sc fan-out} gate can drive multiple outputs. Here a single input drives four outputs.  This gate requires a dual-rail input, three supply particles, and a $2\times 1$ slider.  The \emph{clockwise} control sequence $\langle d,\ell,u,r \rangle$ quadruples the dual-rail input.
}
\end{figure}

\paragraph{Data Storage\label{subsec:Storage}}

A general-purpose computer must be able to store data. 
 A $2\times1$ particle enables us to construct a read/writable data storage for one bit. A single-bit data storage latch is shown in Fig.~\ref{fig:Memory}.
This gate is conservative, and the memory state is given by the position of the $2\times 1$ slider: If the slider is low the memory state is true, if the slider is high the memory state is false. 
This gate implements the truth table in Table \ref{tab:memoryTruthTable}, and  has three inputs \emph{Set}, \emph{Clear}, or \emph{Read}. 
 Only one input should be true, making this a \emph{tri-rail} input. This input can be generated by logic on two dual-rail inputs: a Set/Clear input $S$ and a Read/Write input $R$, where  \emph{Set} $= S\wedge \overline{R}$,  \emph{Clear} $= \overline{S}\wedge \overline{R}$ , and  \emph{Read} $= R$.
Depending on which input is active, the \emph{clockwise} control sequence $\langle d,\ell,u,r \rangle$ will read, set, or clear the memory. 
The gate has a single output $M$ that reports the memory state after the inputs have been computed. The entire gadget requires a $16\times 8$ area.
  By combining an $n$-out {\sc fan-out} gate shown in Fig.~\ref{fig:Fanout4} with $n$ data storage devices, we can read from an $n$-bit memory. 

 \begin{figure*}
\begin{overpic}[width =\columnwidth]{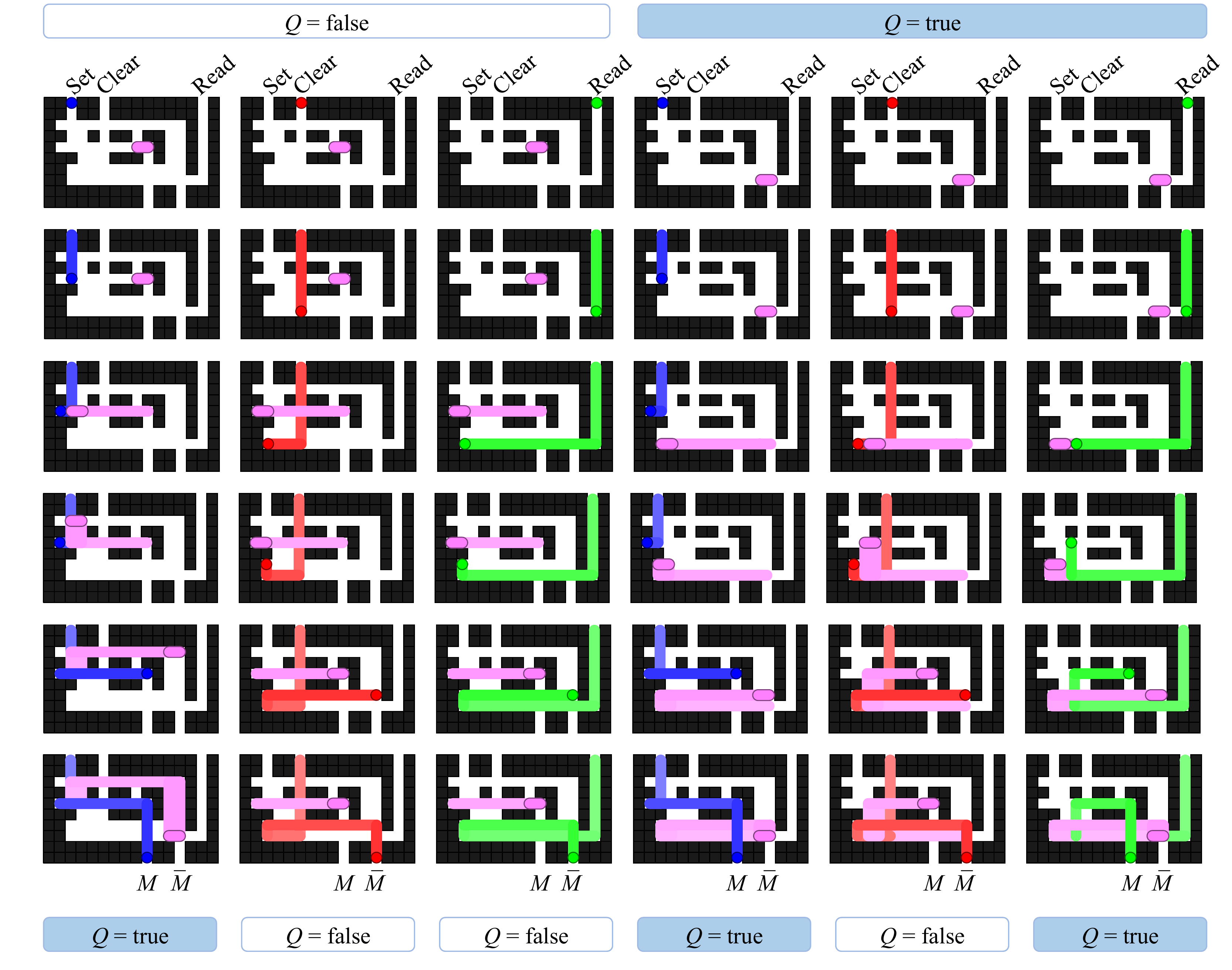}
\put(0,65.5){ \rotatebox{90}{$\langle \rangle$}}
\put(0,53.5){ \rotatebox{90}{$\langle d \rangle$}}
\put(0,43){ \rotatebox{90}{$\langle d,\ell \rangle$}}
\put(0,31){ \rotatebox{90}{$\langle d,\ell,u \rangle$}}
\put(0,19.5){ \rotatebox{90}{$\langle d,\ell,u,r \rangle$}}
\put(0,7.5){ \rotatebox{90}{$\langle d,\ell,u,r,d \rangle$}}
\end{overpic}
\caption{\label{fig:Memory}
A conservative, flip-flop memory gadget. 
 This gadget has a tri-rail input of \emph{Set}, \emph{Clear}, or \emph{Read}; and a $2\times 1$  state variable.  
The memory state $Q$ is given by the position of the $2\times 1$ slider: If the slider is low the memory state is true, if the slider is high the memory state is false.
Depending on which input is active, the \emph{clockwise} control sequence $\langle d,\ell,u,r \rangle$ will read, set, or clear the memory. 
The gate has a single output $M$ that reports the memory state $Q$ after the inputs have been computed. The entire gadget requires a $16\times 8$ area.
}
\end{figure*}

\begin{table}
\begin{displaymath}
\begin{array}{cccc|ccc}
\toprule
Q & S& C& R &Q & M & \overline{M}\\
\midrule     %
0 & 1 & 0 & 0 & 1 & 1 & 0 \\
0 & 0 & 1 & 0 & 0 & 0 & 1 \\
0 & 0 & 0 & 1 & 0 & 0 & 1 \\
1 & 1 & 0 & 0 & 1 & 1 & 0\\
1 & 0 & 1 & 0 & 0 & 0 & 1\\
1 & 0 & 0 & 1 & 1 & 1 & 0\\
\bottomrule
\end{array}
\end{displaymath}
  \caption{A single-bit data storage latch with state $Q$, inputs \emph{Set}, \emph{Clear}, or \emph{Read}, and outputs representing the memory state $M$ and the inverse $\overline{M}$.  \label{tab:memoryTruthTable}}
\end{table}
  
\paragraph{A Binary Counter}\label{sec:binaryCounter}
  Using the {\sc fan-out} gate from Section \ref{sec:FanOut} we can generate arbitrary Boolean logic.  The half adder from Fig.~\ref{fig:HalfAdder} requires a single {\sc fan-out} gate, one {\sc and}, and one {\sc xor} gate.
  
We illustrate how several gates can be combined by constructing a binary counter, shown in Fig.~\ref{fig:Counter}. Six logic gates are used to implement a 3-bit counter. A block diagram of the device is shown in Fig.~\ref{fig:CounterBlockDiagram}. The counter  requires three {\sc fan-out} gates, two summers ({\sc xor}) gates, and one carry ({\sc and}) gate. Six $1\times1$ particles and three $2\times1$ particles are used.  The counter has three levels of gates actuated by CW sequence $\langle d,\ell,u,r \rangle$ and requires three interconnection moves $\langle d,\ell,u,r \rangle$, for a total of 24 moves (6 cycles) per count. 

\begin{figure}\centering
 \begin{overpic}[width =.6\columnwidth]{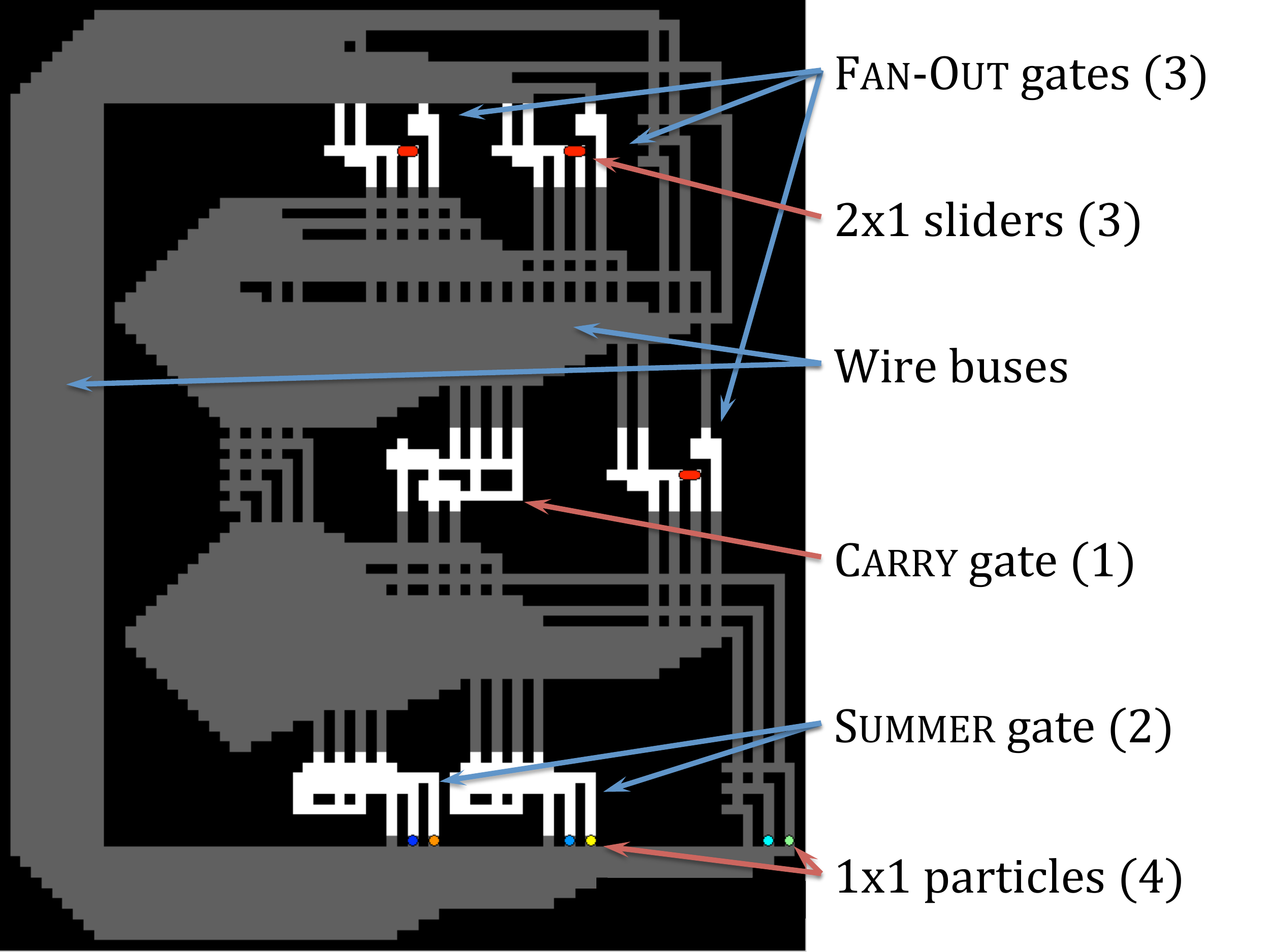}
 \put(30.5,6){\scalebox{0.5}{\textcolor{white}{$b_2  \bar{b}_2\,1$}}}
 \put(42.8,6){\scalebox{0.5}{\textcolor{white}{$b_1  \bar{b}_1\,1$}}}
  \put(58.5,6){\scalebox{0.5}{\textcolor{white}{$b_0  \bar{b}_0\,1$}}}
   \end{overpic}
\vspace{-1em}\\
\caption{
\label{fig:Counter}
A three-bit counter implemented with particles. The counter requires three {\sc fan-out} gates, two summer gates, and one carry gate.  Six 1$\times$1 particles and three 2$\times$1 particles are used.  The gates and wire buses are actuated by the CW sequence $\langle d,\ell,u,r \rangle$. See video at \url{https://youtu.be/QRAOaLZjuBY?t=4m9s} for animation.
}
\end{figure}

 \begin{figure}
\centering
 \begin{overpic}[width =.5\columnwidth]{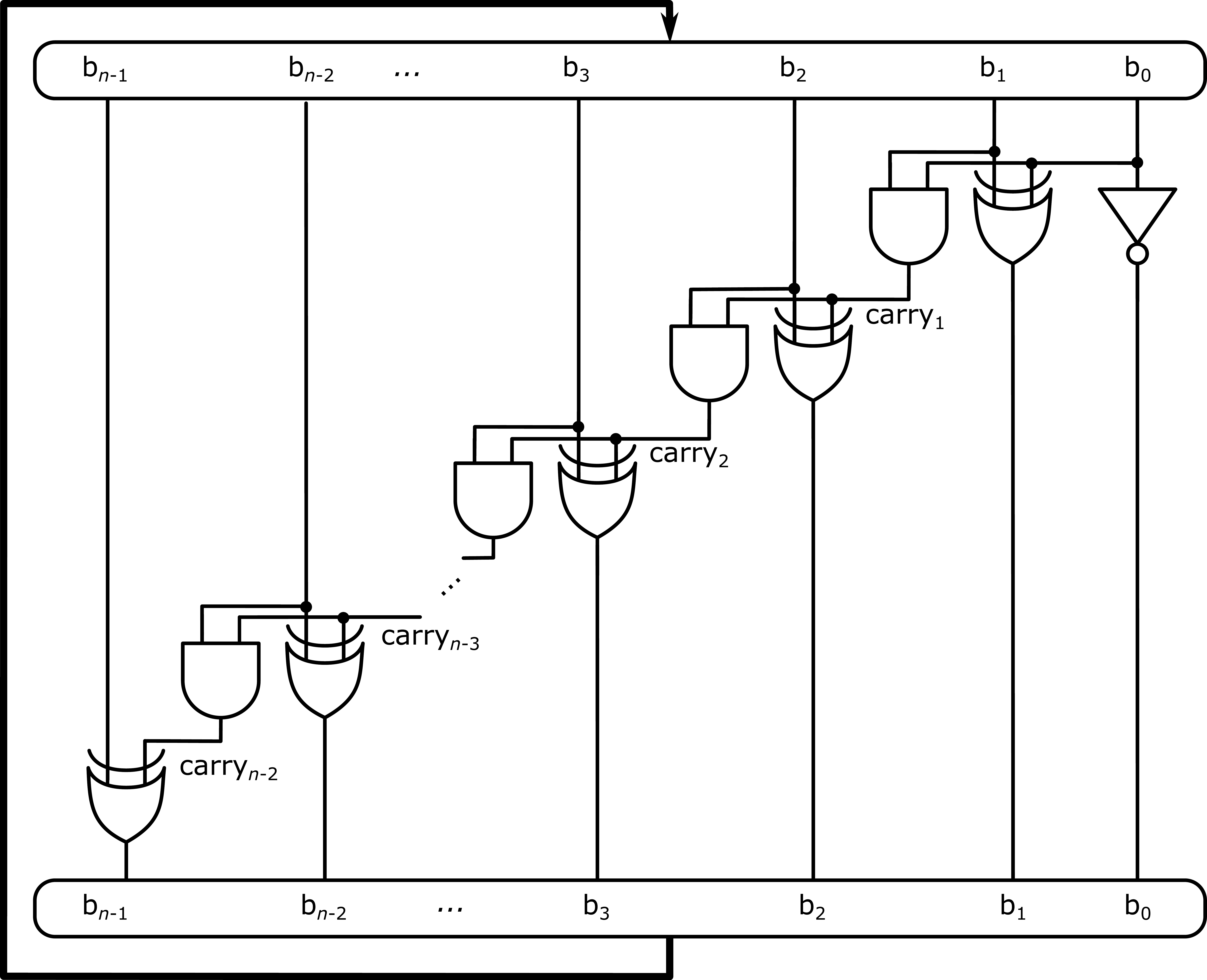}\end{overpic}
\caption{
\label{fig:CounterBlockDiagram}
Gate-level diagram for an $n$-bit counter.  
}
\vspace{-1em}
\end{figure}


\paragraph{Scaling Issues}\setcounter{paragraph}{0}
 Particle computation requires multiple clock cycles, workspace area for gates and interconnections, and many particles.  In this section we analyze how these scale with the size of the counter, using Fig.~\ref{fig:CounterBlockDiagram} as a reference.   

\begin{description}
\item[Gates.]  An $n$-bit counter requires $3(n-1)$ gates: $n$ {\sc fan-out} gates, $n-1$ summers ({\sc xor}) gates, and $n-2$ carry ({\sc and}) gates. 
\item[Particles.] We require $n$ 1$\times$1 particles, one for each bit, and $n$ 2$\times$1 particles, one for each {\sc fan-out} gate.
\item[Propagation delay.] The counter requires $n$ stages of logic and $n$ corresponding wiring stages.  Each stage requires a complete clock cycle $\langle d,\ell,u,r \rangle$ for a total of 8$n$ moves.
\end{description}
These requirements are comparable to a ripple-carry adder:  the delay for $n$ bits is $n$ delays  and requires $5(n-1)+2$ gates.
Numerous other schemes exist to speed up the computation; however, using discrete gates allows us to use standard methods for translating a Boolean expression into gate-level logic.  If speed were critical, instead of using discrete gates, we could engineer the workspace to compute the desired logic directly.  

\paragraph{Optimal Wiring Schemes}\label{sec:wiring}
With our current CW clock cycle, we cannot have outputs from the same column as
inputs---outputs must be either one to the right or three to the left.  
Choosing one of these results in horizontal shifts at each stage and thus
requires spreading out the logic gates. A more compact wiring scheme cycles
through three layers that each shift right by one, followed by one layer that
shifts left by three.  We also want the wiring to be tight left-to-right.  If our
height is also limited, \emph{wire buses}, shown in Fig.~\ref{fig:Counter} provide a compact solution. 

\paragraph{Optimized logic}\label{sec:optimizedLogic}
The particle-computation presented in this section is general purpose, and Fig.~\ref{fig:Counter} illustrates how a set of gate components can be composed to compute arbitrary logic.
Single-purpose logic can often be more compact, as shown by Fig.~\ref{fig:compactCounter}, which shows three counters that use less area and fewer particles than Fig.~\ref{fig:Counter}.

 \begin{figure}
\centering
 \begin{overpic}[width =.75\columnwidth]{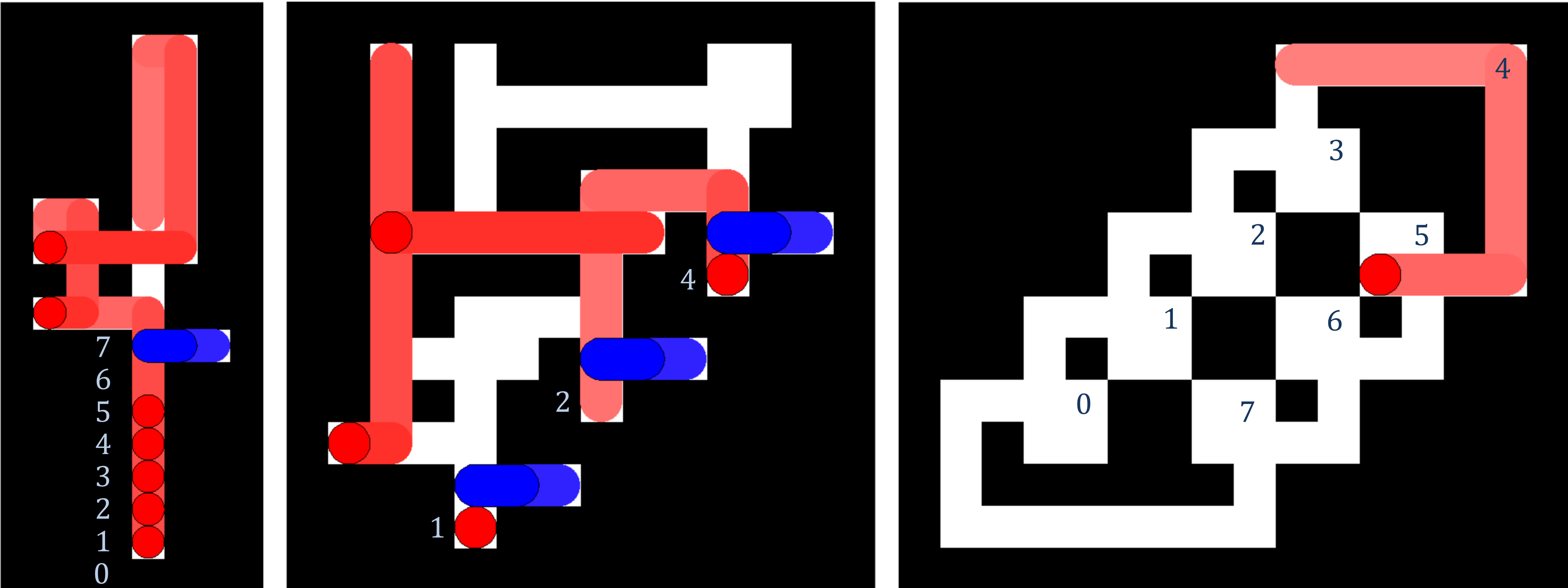}\end{overpic}
\caption{
\label{fig:compactCounter}
Custom logic can be compact. Each counter above uses the clock sequence $\langle d,\ell,u,r \rangle$ and resets after 8 cycles (32 moves). (Left) an arrangement that cyclically counts from zero to seven: seven particles, one 2$\times$1 slider, 8$\times$18 area. (Center) a binary counter with three bits: three particles, three 2$\times$1 sliders, 14$\times$14 area. (Right) a gadget that resets every 32 moves: one particle, 16$\times$14 area.
}
\vspace{-1em}
\end{figure}

\section{Conclusion}\label{sec:conclusion}

This paper analyzes the problem of steering many particles with uniform inputs
in a 2D environment containing obstacles. We design environments that
can efficiently perform matrix operations on groups of particles in parallel,
including a matrix permutation requiring only four moves for any number of
particles. These matrix operations enable us to prove that the general motion
planning problem is {\sc pspace}-complete.

We also introduce a new model for mechanical computation.  We
(1) prove the insufficiency of unit-size particles for gate fan-out; 
(2) establish the necessity of dual-rail logic for Boolean logic;  
(3) design {\sc fan-out} gates and memory latches by employing slightly different particles; and 
(4) present an architecture for device integration, a common clock sequence, and a binary counter.
 
  There remain many interesting problems to solve. We are motivated by
practical challenges in steering micro-particles through vascular networks,  which
are common in biology. Though some are two-dimensional, including the leaf
example in Fig.~\ref{fig:vascularNetwork} and endothelial networks on the
surface of organs, many of these networks are three dimensional.
Magnetically actuated systems are capable of providing 3D control inputs, but
control design poses additional challenges.

   We investigate a subset of control in which all particles move
maximally. Future work should investigate more general motion---what happens
if we can move all the particles a discrete distance or
along an arbitrary curve? We also abstracted important practical constraints, e.g.,
ferromagnetic objects tend to clump in a magnetic field, and most magnetic fields are not perfectly uniform.

Finally, our research has potential applications in micro-construction, microfluidics, and
nano-assembly.  These applications require additional theoretical analysis to
model heterogeneous objects and objects that bond when forced together, e.g.,
MEMS components and molecular chains.


\section*{Acknowledgments}

We thank an anonymous reviewer for carefully going through our work and making numerous 
constructive suggestions that helped to improve the presentation of our paper.
We thank Hamed Mohtasham Shad for building and testing the first
experimental tilt tables that brought these algorithms to life.  We acknowledge
the helpful discussion and motivating experimental efforts with \emph{T.
pyriformis} cells by Yan Ou and Agung Julius at RPI and Paul Kim and MinJun Kim
at Drexel University.  
Preliminary versions of 
        Section~\ref{sec:mazes} and Section~\ref{sec:matrices} are main topics of our paper~\cite{Becker2013f} 
        with an extra result proving the system to give rise to {\sc pspace}-completeness in Section~\ref{subsec:pspaceComplete} from paper  \cite{Becker2014}.
        The particle logic in Sections~\ref{sec:logic} and~\ref{sec:Design} was introduced in \cite{Becker2014} and completed in paper \cite{shad2015particle}.

This work has been partially supported by the National Science Foundation (Grant No.\ 
\href{http://nsf.gov/awardsearch/showAward?AWD_ID=1553063}{[IIS-1553063]}
 and 
\href{http://nsf.gov/awardsearch/showAward?AWD_ID=1619278}{[IIS-1619278]}).

\bibliographystyle{abbrv}
\bibliography{refs}

\end{document}